\def\>{\rangle}
\def\<{\langle}
\def\be{\begin{equation}}
\def\ee{\end{equation}}
\newtheorem{theorem}{Theorem}
\newtheorem{lemma}[theorem]{Lemma}
\newtheorem{corollary}[theorem]{Corollary}
\newtheorem{definition}[theorem]{Definition}
\newcommand{\cB}{\mathcal{B}}
\newcommand{\cC}{\mathcal{C}}
\newcommand{\cD}{\mathcal{D}}
\newcommand{\cH}{\mathcal{H}}
\newcommand{\cJ}{\mathcal{J}}
\newcommand{\cK}{\mathcal{K}}
\newcommand{\cM}{\mathcal{M}}
\newcommand{\cN}{\mathcal{N}}
\newcommand{\one}{\mathds{1}}
\newcommand{\eps}{\varepsilon}
\DeclareMathOperator{\tr}{tr}
\DeclareMathOperator{\id}{id}
\newcommand{\sumi}{\sum\nolimits}
\newcommand{\ox}{\otimes}
\newcommand{\dn}{\diamond}
\newcommand{\bp}{\mathbf{p}}
\newcommand{\bs}{\mathbf{s}}
\newcommand{\bq}{\mathbf{q}}
\newcommand{\ip}[2]{\langle #1 , #2\rangle}
\DeclareMathOperator{\dg}{dg}
\title{\bf Quantum and private capacities of low-noise channels \\[0ex]}
\author[a,b]{Felix Leditzky\thanks{Email: \texttt{felix.leditzky@jila.colorado.edu}}}
\author[c]{Debbie Leung\thanks{Email: \texttt{wcleung@uwaterloo.ca}}}
\author[a,b,d]{Graeme Smith\thanks{Email: \texttt{gsbsmith@gmail.com}}}
\affil[a]{\small JILA, University of Colorado/NIST, 440 UCB, Boulder, CO 80309, USA}
\affil[b]{\small Center for Theory of Quantum Matter, University of Colorado, Boulder, Colorado 80309, USA}
\affil[c]{\small Institute for Quantum Computing, University of Waterloo, Waterloo, Ontario, Canada, N2L 3G1}
\affil[d]{\small Department of Physics, University of Colorado, 390 UCB, Boulder, CO 80309, USA}
\date{~\vspace*{-3ex} \\ \today} 
\begin{document}
\maketitle

\begin{abstract}
We determine both the quantum and the private capacities of low-noise
quantum channels to leading orders in the channel's distance to the
perfect channel.  It has been an open problem for more than 20 years 
to determine the
capacities of some of these low-noise channels such as the
depolarizing channel.

We also show that both capacities are equal to the single-letter
coherent information of the channel, again to leading orders.  We thus find
that, in the low noise regime, super-additivity and degenerate codes
have negligible benefit for the quantum capacity, and shielding does
not improve the private capacity beyond the quantum capacity, in
stark contrast to the situation when noisier channels are considered.  
\end{abstract}

\setlength{\parskip}{1ex} \setlength{\parindent}{0em}

\section{Introduction}

Any point-to-point communication link can be modeled as a quantum
channel $\cN$ from a sender to a receiver.  Of fundamental interest
are the {\em capacities} of $\cN$ to transmit data of various types such as quantum, private, or classical data.
Informally, the
capacity of $\cN$ to transmit a certain type of data is the optimal
rate at which that data can be transmitted with high fidelity given an
asymptotically large number of uses of $\cN$.  Capacities of a channel
quantify its value as a communication resource.

In the classical setting, the capacity of a classical channel $\cN$ to
transmit classical data is given by Shannon's noisy coding theorem
\cite{Shannon48}.  While operationally, the capacity-achieving error
correcting codes may have increasingly large block lengths, the
capacity can be expressed as a {\em single letter formula}: it is the
maximum correlation between input and output that can be generated
with a {\em single} channel use, where correlation is measured by the mutual
information.

In the quantum setting, the capacity of a quantum channel $\cN$ to
transmit quantum data, denoted $Q(\cN)$, is given by the LSD theorem
\cite{Lloyd97,Shor02,Devetak05}.  A capacity expression is found, but
it involves a quantity optimized over an {\em unbounded} number of
uses of the channel.  This quantity, when optimized over $n$ channel
uses, is called the $n$-shot coherent information.  Dividing the
$n$-shot coherent information by $n$ and taking the limit $n \to
\infty$ gives the capacity.
For special channels called {\em degradable} channels, the coherent
information is {\em weakly additive}, meaning that the $n$-shot
coherent information {\em is} $n$ times the $1$-shot coherent
information \cite{Shor-Devetak-05}, hence the capacity is the $1$-shot
coherent information and can be evaluated in principle.
In general, the coherent information can be {\em superadditive},
meaning that the $n$-shot coherent information can be more than $n$
times the $1$-shot coherent information, thus the optimization over $n$
is necessary \cite{DiVincenzoSS98}.  Consequently, there is no general
algorithm to compute the capacity of a given channel.
Furthermore, the $n$-shot coherent information can be positive for
some small $n$ while the $1$-shot coherent information is zero
\cite{DiVincenzoSS98}.  
Moreover, given any $n$, there is a channel whose $n$-shot coherent
information is zero but whose quantum capacity is positive
\cite{Cubitt2015unbounded}.
Thus we do not have a general method to determine if a given
channel has positive quantum capacity.

Even for the qubit depolarizing channel, which acts as 
$\cD_p(\rho) = (1-\frac{4p}{3})\,\rho + \frac{4p}{3} \frac{I}{2}$, 
our understanding of the quantum capacity is limited.  For $p=0$ the channel is perfect, so we have $Q(\cD_0) = 1$, while for $p \geq 1/4$, we know that $Q(\cD_{p}) = 0$ \cite{BDEFMS98}. However, for $0<p<1/4$ the quantum capacity of $\cD_p$ is unknown despite substantial effort (see e.g.~\cite{SS07,Fern08,LW15}).
For $p\approx 0.2$, communication rates higher than the
$1$-shot coherent information are achievable \cite{DiVincenzoSS98,SS07,Fern08}, but even the threshold value of $p$ where the capacity goes to zero is unknown.
For $p$ close to zero, the best lower bound for $Q(\cD_p)$ is the one-shot coherent information.
In this regime, the continuity bound developed in \cite{Leung-Smith-08} is
insufficient to constrain the quantum capacity of $\cD_p$ to 
the leading order in $p$, and 
while various other upper bounds exist, they all differ from the one-shot coherent information by $O(p)$.
Recently, a numerical upper bound on the capacity of the
low-noise depolarizing channel \cite{SSWR15} was found to be very close
to the $1$-shot coherent information.  Meanwhile, the complementary
channel for the depolarizing channel for any $p>0$ is found to always
have positive capacity \cite{LW15}, which renders several techniques
inapplicable, including those described in \cite{Watanabe2012} or a
generalization of degradability to ``information degradability'' \cite{CLS17}.

In this paper, we consider the quantum capacity of ``low-noise quantum
channels'' that are close to the identity channel, and investigate how
close the capacity is to the $1$-shot coherent information.  It has been unclear whether we should 
expect substantial nonadditivity of coherent information for such channels.
On the one hand, all known degenerate codes that provide a boost to quantum capacity first encode one logical qubit  into a small number of physical qubits, which
incurs a significant penalty in rate.  This would seem to preclude any benefit in the regime where
the $1$-shot coherent information is already quite high.  On the other hand, we have no effective methods for
evaluating the $n$-letter coherent information for large $n$, and there may well exist new types of coding strategies that incur no such penalty in the large $n$ regime.

We prove in this paper that to linear order in the noise parameter,
the quantum capacity of any low-noise channel {\em is} its $1$-shot
coherent information (see Theorem \ref{lem:qbound2}).  Consequently, degenerate
codes cannot improve the rates of these channels up to the same order.
For the special cases of the qubit depolarizing channel, the mixed
Pauli channel and their qudit generalizations, we show that the
quantum capacity and the $1$-shot coherent information agree to even
higher order (see Theorem \ref{cor:qpcapdepol}).

Our findings extend to the private capacity $P(\cN)$ of a quantum channel $\cN$,
which displays similar complexities to the quantum capacity.
The private capacity is equal to the regularized private information
\cite{Devetak05}, but the private information is not additive
(\cite{SRS08,Li09,SS09}).
In \cite{HHHO05}, the private capacity, which is never smaller than
the quantum capacity, is found to be positive for some channels that
have no quantum capacity.  The authors also characterize the type of
noise that hurts quantum transmission and that can be ``shielded''
from corrupting private data.
In \cite{LLSS14}, channels are found with almost no quantum capacity 
but maximum private capacity.  
Meanwhile, for degradable channels, the private capacity is
again equal to the $1$-shot coherent information \cite{Smith08}.  
This coincidence of $P$ and $Q$ for degradable channels means that our
findings for the quantum capacity can be carried over to private
capacity fairly easily.
In the low-noise regime the private capacity is also equal to the
$1$-shot coherent information to linear order in the noise parameter,
and is equal to the quantum capacity in the same order (see Theorems \ref{lem:qbound2} and \ref{cor:qpcapdepol}).  
Consequently, shielding provides little benefit.

Our results follow closely the approach in \cite{SSWR15}.  Consider a
channel $\cN$ and its complementary channel $\cN^c$. The channel $\cN$ is
degradable if there is another channel $\cM$ (called a degrading map)
such that $\cM \circ \cN = \cN^c$.  Instead of measuring how close
$\cN$ is to some degradable channel, \cite{SSWR15} considers how close
$\cM \circ \cN$ can be to $\cN^c$ when optimizing over all channels $\cM$, a measure we call the degradability
parameter of $\cN$. Furthermore, this distance between $\cN^c$ and 
$\cM\circ\cN$ as well as the best approximate degrading map $\cM$ can be
obtained via semidefinite programming.  Continuity results, relative
to the case as if $\cN$ is degradable, can then be obtained similarly
to \cite{Leung-Smith-08}.
This new bound in \cite{SSWR15} limits the difference between the
$1$-shot coherent information and the quantum capacity to $O(\eta \log
\eta)$ where $\eta$ is the degradability parameter.  Note that
$\eta \log \eta$ does not have a finite slope at $\eta = 0$ but it 
goes to zero faster than $\eta^b$ for any $b<1$.  While this method
does not yield explicit upper bounds, once a channel of interest is
fixed, it is fairly easy to evaluate the degradability parameter (via
semidefinite programming) and the resulting capacity bounds
numerically.

The primary contribution in this paper is an analytic proof of a
surprising fact that, for low-noise channels whose diamond-norm
distance to being noiseless is $\eps$, the degradability parameter
$\eta$ grows at most as fast as $O(\eps^{1.5})$, rendering the
gap $O(\eta \log \eta)$ between the $1$-shot coherent
information and the quantum or private capacity only sublinear in
$\eps$ (see Theorem \ref{thm:complementary-degrading}). 
For the qubit depolarizing channel and its various generalizations, 
we improve the analytic bound of $\eta$ to $O(\eps^{2})$ (see Theorem \ref{thm:depol-degradability-parameter}).  
Furthermore, for both results, we provide constructive approximate degrading maps and 
explain why they work well.  

The rest of the paper is structured as follows. 
In Section \ref{sec:background}, we explain both our notations and prior
results relevant to our discussion.  We present our results for a
general low-noise channel in Section \ref{sec:general} and for the
depolarizing channel in Section \ref{sec:depol-channel} (with the various
generalizations in \Cref{sec:gen-pauli-channels}).  We
conclude with some extensions and implications of our results in Section \ref{sec:conclude}.

\section{Background} 
\label{sec:background}

In this paper we only consider finite-dimensional Hilbert spaces.
For a Hilbert space $\cH$, we denote by $\cB(\cH)$ the set of linear operators on $\cH$.
We write $M_A$ for operators defined on a Hilbert space $\cH_A$ associated with a quantum system $A$.
We denote by $I_A$ the identity operator on $\cH_A$, and by $\id_A$ the identity map on $\cB(\cH_A)$.  
We denote the dimension of $A$ by $|A| = \dim\cH_A$.  
A (quantum) state $\rho_A$ on a quantum system $A$ is a positive semidefinite operator with unit trace, that is, $\rho_A\geq 0$ and $\tr\rho_A=1$.

\subsection{Quantum channels}

Any point-to-point communication link that transmits an input to
an output state can be modeled as a \emph{quantum channel}.
Mathematically, a quantum channel is a linear, completely positive,
and trace-preserving map $\cN$ from $\cB(\cH_A)$ to $\cB(\cH_B)$.  
We often use the shorthand $\cN\colon A\to B$.  
If a channel $\cN' \colon A \to B'$ acts as $\cN$ followed by an isometry from $B$ to $B'$, then we call $\cN'$ \emph{equivalent} to $\cN$. 
This is an equivalence relation on the set of quantum channels, and for a given channel $\cN$ all analysis of interest in this paper applies to any channel in the equivalence class of $\cN$.
We summarize two useful representations for quantum channels in the following
\cite{Nielsen-Chuang,Wat16}.  
 
For every quantum channel $\cN\colon A\to B$ there is an \emph{isometric extension} $V\colon \cH_A\to \cH_B\ox\cH_E$ for some auxiliary Hilbert space $\cH_E$ associated with some \emph{environment} system $E$, such that $\cN(\rho_A) = \tr_E(V\rho_A V^\dagger)$ \cite{Stinespring}.  
The isometric extension is also called the Stinespring dilation, and this representation is sometimes called the unitary representation.  
Note that the isometric extension is unique up to {\em left} multiplication by some unitary on $E$, and this degree of freedom does not affect most analysis of interest.  
Physically, the isometric extension distributes the input to $B$ and $E$ jointly, and the channel $\cN$ is noisy because the information in $E$ is no longer accessible.
The \emph{complementary channel} $\cN^c\colon A\to E$ of $\cN$ with respect to a specific isometric extension $V$ is defined by $\cN^c(\rho_A) \coloneqq \tr_B(V\rho_A V^\dagger)$.  
Note that all channels complementary to $\cN$ are equivalent.  

The second representation we use is the \emph{Choi-Jamio\l kowski isomorphism}.  It is a bijection $\cJ$ from the set of quantum channels $\cN \colon A \to B$ to the set of positive operators $\tau_{A'B} \in \cB(\cH_{A'}\ox\cH_B)$ satisfying $\tr_B \tau_{A'B} = I_{A'}$, given by:
\begin{align}
\cJ(\cN) \coloneqq (\id_{A'}\ox \cN)(\gamma_{A'A}),
\end{align}
where $|\gamma\rangle_{A'A}\coloneqq \sum_i |i\rangle_{A'}\ox|i\rangle_{A}$ is proportional to the maximally entangled state on $A'A$, the system $A'$ is isomorphic to $A$, and $\gamma_{A'A} = 
|\gamma\>\<\gamma|_{A'A}$.
The inverse of $\cJ$ applied to $\tau_{A'B}$ yields a channel whose action on an operator $\rho_A$ defined on a system $A$ is given by
\begin{align}
\left( \cJ^{-1}(\tau_{A'B}) \right) (\rho_A) = \tr_{A'} \left(\tau_{A'B}\left( (\rho^T)_{A'} \ox\one_B\right)\right),
\end{align}
where $T$ denotes transposition with respect to the basis $\lbrace |i\rangle\rbrace$ that defines $|\gamma\rangle$.  The operator $\cJ(\cN)$ is called the Choi matrix of $\cN$. It is uniquely determined by $\cN$ and it is basis-dependent. 
The rank of $\cJ(\cN)$ is called the Choi rank of $\cN$.  It is the
minimum dimension of the environment $E$ for an isometric extension of
$\cN$.  It is basis-independent, and independent of the choice of the 
isometric extension.  

\subsection{The diamond norm and the continuity of the isometric extension} 
\label{sec:diamond}

For an operator $M\in\cB(\cH)$ we define the \emph{trace norm} $\|M\|_1$, the \emph{operator norm} $\|M\|_\infty$, and the max norm $\|M\|_{\rm \max}$ as follows:
\begin{align}
\|M\|_1 &\coloneqq \tr\sqrt{M^\dagger M}\, , \\
\|M\|_\infty &\coloneqq \max \left\lbrace \sqrt{\langle\psi|M^\dagger M|\psi\rangle} \colon |\psi\rangle\in\cH, \langle\psi|\psi\rangle = 1 \right\rbrace \, , \\
\|M\|_{\rm \max} &\coloneqq \max_{i,j} |M_{i,j}|\, .
\end{align} 
Note that the max norm is basis-dependent, unlike the trace and the operator norms. 

We now discuss the distance measure we use for channels.  
For a linear map $\Phi\colon \cB(\cH)\to\cB(\cK)$ between Hilbert spaces $\cH$ and $\cK$, we define its \emph{diamond norm} by
\begin{align}
\|\Phi\|_\dn \coloneqq \max \lbrace \|(\id_{\mathbb{C}^n} \ox \Phi)(M) \|_1 \colon M\in\cB(\mathbb{C}^n\ox\cH), \|M\|_1 = 1, n\in\mathbb{N} \rbrace,\label{eq:diamond-norm}
\end{align}
where $\id_{\mathbb{C}^n}$ denotes the identity map on $\mathbb{C}^n$.
It suffices to take $n$ as large as $\dim(\cH)$ in the above optimization 
so that the maximum can be attained. 
When applied to the difference of two channels $\cN_1 - \cN_2$, the diamond
norm has a simple operational meaning: it is twice the maximum of the trace
distance of the two output states $(\id_{\mathbb{C}^n} \ox \cN_i )(M)$
(for $i=1,2$) created by the two channels on the best common input $M$, in the
presence of an ancillary space $\mathbb{C}^n$.  The trace distance 
of two states in turn signifies their distinguishability \cite{Helstrom}. 
We summarize two characterizations of the diamond norm, a method to
upper bound it, and a continuity result for the isometric extension in
the rest of this subsection.

First, the diamond norm $\|\cN_1 - \cN_2\|_\dn$ of the difference of two quantum channels $\cN_1\colon A\to B$ and $\cN_2\colon A\to B$ can be computed by solving the following semidefinite 
program (SDP) \cite{Wat09}:
\begin{align}
\begin{aligned}
\text{\normalfont minimize: } & 2\mu\\
\text{\normalfont subject to: } & \tr_BZ_{AB} \leq \mu I_A\\
& Z_{AB} \geq \cJ(\cN_1) - \cJ(\cN_2)\\
& Z_{AB} \geq 0.
\end{aligned}
\label{eq:diamond-norm-sdp}
\end{align}

Second, the diamond norm of a linear map $\Theta$ (which is not necessarily trace-preserving or completely positive) can be rewritten as (\cite{Wat12}, Thm.~6):
\begin{align}
\|\Theta\|_\dn = \max_{\rho_1,\,\rho_2} \left\lbrace \left\|(\sqrt{\rho_1}\ox\one_B) \cJ(\Theta) (\sqrt{\rho_2}\ox\one_B) \right\|_1 \right\rbrace,
\label{eq:diamond-norm-rewrite}
\end{align}
where the maximum is over density matrices $\rho_1$ and $\rho_2$ on the input system $A$ of the map $\Theta$.

We prove the following technical lemma that upper bounds the diamond norm of an arbitrary linear map using the max norm of its Choi matrix:  

\begin{lemma}\label{lem:coefficients-bound}
	For a linear map $\Theta\colon A\to B$, we have $\|\Theta\|_\dn \leq |A| \, |B|^2 
	\; \|\cJ(\Theta)\|_{\max}.$
\end{lemma}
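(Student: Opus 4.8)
The plan is to start from the variational formula \eqref{eq:diamond-norm-rewrite} for the diamond norm of a CP map and reduce the estimate to a bound on the trace norm of the Choi matrix $\cJ(\Theta)$, which is positive semidefinite precisely because $\Theta$ is completely positive. The dimensional factor $|A|\,|B|^2$ will then emerge by decomposing $\cJ(\Theta)$ into $|B|^2$ blocks indexed by the output basis and controlling each block separately using positivity.

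Concretely, first I would apply \eqref{eq:diamond-norm-rewrite} to $\Theta$ and note that, since $\|\sqrt{\rho_i}\ox\one_B\|_\infty = \|\sqrt{\rho_i}\|_\infty \le 1$ for any density matrix $\rho_i$, the submultiplicativity $\|XYZ\|_1 \le \|X\|_\infty \|Y\|_1 \|Z\|_\infty$ collapses the maximization to $\|\Theta\|_\dn \le \|\cJ(\Theta)\|_1$. Next I would write $\cJ(\Theta) = \sum_{j,j'} J_{jj'} \ox |j\rangle\langle j'|_B$, where the block $J_{jj'} \coloneqq \langle j|_B\,\cJ(\Theta)\,|j'\rangle_B$ acts on $\cH_{A'}$, and invoke the triangle inequality together with $\||j\rangle\langle j'|_B\|_1 = 1$ to obtain $\|\cJ(\Theta)\|_1 \le \sum_{j,j'} \|J_{jj'}\|_1$. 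It then remains to bound each block. For the diagonal blocks, positivity of $\cJ(\Theta)$ gives $J_{jj}\ge 0$, so $\|J_{jj}\|_1 = \tr J_{jj} \le |A|\,\|\cJ(\Theta)\|_{\max}$, since the $|A|$ diagonal entries of $J_{jj}$ are diagonal entries of $\cJ(\Theta)$ and each is at most $\|\cJ(\Theta)\|_{\max}$ in absolute value.

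The one step that is not pure bookkeeping is the off-diagonal blocks, and I expect it to be the main obstacle, since it is where complete positivity is used in an essential way. Compressing $\cJ(\Theta)$ to the rows and columns labelled by $j$ and $j'$ yields a positive operator $\left(\begin{smallmatrix} J_{jj} & J_{jj'} \\ J_{j'j} & J_{j'j'}\end{smallmatrix}\right) \ge 0$, whose off-diagonal block therefore factorizes as $J_{jj'} = J_{jj}^{1/2}\,K\,J_{j'j'}^{1/2}$ for some contraction $K$ with $\|K\|_\infty \le 1$. The Hilbert--Schmidt Cauchy--Schwarz inequality $\|XY\|_1 \le \|X\|_2\|Y\|_2$ then gives $\|J_{jj'}\|_1 \le \|J_{jj}^{1/2}\|_2\,\|J_{j'j'}^{1/2}\|_2 = \sqrt{\tr J_{jj}}\,\sqrt{\tr J_{j'j'}} = \sqrt{\|J_{jj}\|_1\,\|J_{j'j'}\|_1} \le |A|\,\|\cJ(\Theta)\|_{\max}$, using the diagonal-block bound just established. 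Summing this uniform estimate over all $|B|^2$ pairs $(j,j')$ yields $\|\Theta\|_\dn \le |A|\,|B|^2\,\|\cJ(\Theta)\|_{\max}$, as claimed. The remaining verifications—the validity of the factorization for positive block matrices and the collapse of \eqref{eq:diamond-norm-rewrite}—are short and routine.
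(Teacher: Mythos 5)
Your proof is correct and reaches the paper's bound by a genuinely different route. The paper works directly with \eqref{eq:diamond-norm-rewrite}: it writes the trace norm as $\max_U|\tr(UM)|$, splits $\cJ(\Theta)=\sqrt{\cJ(\Theta)}\sqrt{\cJ(\Theta)}$ (this is where complete positivity enters), and applies Cauchy--Schwarz for the Hilbert--Schmidt inner product to eliminate the input states, arriving at $\|\Theta\|_\dn\le|B|\,\|\cJ(\Theta)\|_\infty$; the max norm then enters through the Frobenius estimate $\|J\|_\infty^2\le\tr(J^2)\le(|A||B|)^2\|J\|_{\max}^2$. You instead dispose of the states at the outset via $\|XYZ\|_1\le\|X\|_\infty\|Y\|_1\|Z\|_\infty$, reducing everything to $\|\Theta\|_\dn\le\|\cJ(\Theta)\|_1$, and then bound the trace norm block by block, using $J_{jj}\ge0$ on the diagonal and the contraction factorization $J_{jj'}=J_{jj}^{1/2}\,K\,J_{j'j'}^{1/2}$ of a positive $2\times2$ block operator off the diagonal; every step is sound, and positivity of the Choi matrix carries the weight in your argument just as it does in the paper's. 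Two remarks on the comparison. First, your route actually contains a shortcut that makes the step you flagged as ``the main obstacle'' unnecessary: once you have $\|\Theta\|_\dn\le\|\cJ(\Theta)\|_1$ and $\cJ(\Theta)\ge0$, the trace norm is just the trace, a sum of $|A||B|$ diagonal entries, so $\|\Theta\|_\dn\le|A|\,|B|\,\|\cJ(\Theta)\|_{\max}$ follows in one line --- a bound \emph{stronger} than the lemma by a factor of $|B|$, with no block decomposition at all; the off-diagonal analysis is wasteful precisely because, for a positive semidefinite matrix, the trace norm never sees the off-diagonal blocks. Second, the intermediate quantities differ in character: neither $|B|\,\|\cJ(\Theta)\|_\infty$ (paper) nor $\|\cJ(\Theta)\|_1$ (yours) dominates the other in general --- the former wins when the Choi spectrum is flat, the latter when it is concentrated --- but for the intended application, where all Choi coefficients are $O(p^2)$ in fixed dimension, the two routes are interchangeable, and yours is the more elementary since it avoids the unitary dual formulation of the trace norm entirely.
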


\begin{proof}
	We start with the second characterization of the diamond norm in \eqref{eq:diamond-norm-rewrite} above, and           
	let $\sigma_1$ and $\sigma_2$ be states on $A$ achieving the maximum in \eqref{eq:diamond-norm-rewrite}, such that
	\begin{align}
	\|\Theta\|_\dn &= \left\|(\sqrt{\sigma_1}\ox\one_B) \, \cJ(\Theta) \, (\sqrt{\sigma_2}\ox\one_B) \right\|_1.\label{eq:dn-trace-distance}
	\end{align}
	Recall that the Schatten norms $\|X\|_p\coloneqq \left(\tr(X^\dagger X)^{p/2}\right)^{1/p}$ satisfy the H\"{o}lder inequality $\|XY\|_1 \leq \|X\|_p\|Y\|_q$ for $p,q\in[1,\infty]$ with $1=\frac{1}{p}+\frac{1}{q}$.
	Applying this to the right-hand side of \eqref{eq:dn-trace-distance} with $p=2=q$ gives
	\begin{align}
		\|\Theta\|_\dn = \left\|(\sqrt{\sigma_1}\ox\one_B) \, \cJ(\Theta) \, (\sqrt{\sigma_2}\ox\one_B) \right\|_1 &\leq \left\|\sqrt{\sigma_1}\ox\one_B\right\|_2 \left\|\cJ(\Theta) \, (\sqrt{\sigma_2}\ox\one_B) \right\|_2\\
		&\leq \left\|\sqrt{\sigma_1}\ox\one_B\right\|_2 \left\|\cJ(\Theta)\right\|_\infty \left\|\sqrt{\sigma_2}\ox\one_B \right\|_2,
	\end{align}
	where the second inequality uses $\|XY\|_2^2 = \tr(X^\dagger X YY^\dagger) \leq \tr(X^\dagger X \|Y\|_\infty^2I) = \|Y\|_\infty^2 \|X\|_2^2$.
	We have $\|\sqrt{\sigma_i}\ox \one_B\|_2 = (\tr(\sigma_i\ox\one_B))^{1/2} = |B|^{1/2}$ for $i=1,2$ since the $\sigma_i$ are quantum states.
	Hence,
	\begin{align}
		\|\Theta\|_\dn \leq |B| \; \|\cJ(\Theta)\|_\infty.\label{eq:dn-estimate}
	\end{align}

	To bound the operator norm of $\cJ\equiv \cJ(\Theta)$ in \eqref{eq:dn-estimate}, let $n=|A||B|$ and $s_1\geq \dots \geq s_n\geq 0$ be the singular values of $J$ in descending order.
	Recall furthermore that $\|X\|_\infty^2 = \|X^\dagger X\|_\infty$.
	Then,
	\begin{align}
	\|\cJ\|_\infty^2 = \|\cJ^\dagger \cJ\|_\infty =  s_1^2 \leq \sum_{k=1}^n s_k^2 = \tr(\cJ^\dagger \cJ) = \sum_{i,\,j=1}^{n} \overline{\cJ_{ij}} \cJ_{ij} = \sum_{i,\,j=1}^{n} |\cJ_{ij}|^2 \leq n^2 \|\cJ\|^2_{\max}.
	\end{align}
	Hence, $\|\cJ(\Theta)\|_\infty \leq |A| \, |B| \; \|\cJ(\Theta)\|_{\max}$, which together with \eqref{eq:dn-estimate} proves the claim.
\end{proof}

\begin{corollary}\label{cor:choi-coefficients}
	If $\Theta$ is a linear map whose Choi matrix has coefficients $O(p^2)$ for $p$ in a neighborhood of $0$, then also $\|\Theta\|_\dn = O(p^2)$.
\end{corollary}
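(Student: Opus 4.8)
The plan is to obtain this immediately from Lemma \ref{lem:coefficients-bound}, since that lemma already bounds the diamond norm by a dimension-dependent constant times the max norm of the Choi matrix. First I would invoke the inequality $\|\Theta\|_\dn \leq |A|\,|B|^2\,\|\cJ(\Theta)\|_{\max}$. Here the prefactor $|A|\,|B|^2$ is a fixed constant independent of $p$, because the input and output systems $A$ and $B$ are fixed; only the map $\Theta$, and hence its Choi matrix, depends on $p$.

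Next I would unpack the definition $\|\cJ(\Theta)\|_{\max} = \max_{i,j} |\cJ(\Theta)_{ij}|$, so that the max norm is precisely the largest-magnitude coefficient of the Choi matrix. The hypothesis states that every coefficient $\cJ(\Theta)_{ij}$ is $O(p^2)$ as $p\to 0$. Since the Choi matrix is a fixed $|A||B|\times|A||B|$ matrix, there are only finitely many coefficients, and the maximum of finitely many functions each of order $O(p^2)$ is again $O(p^2)$. This finiteness is the one place where any care is needed: a supremum of individually $O(p^2)$ quantities over an unbounded index set need not be $O(p^2)$, so it matters that the number of entries does not grow with $p$. Hence $\|\cJ(\Theta)\|_{\max} = O(p^2)$.

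Combining the two observations gives $\|\Theta\|_\dn \leq |A|\,|B|^2\,\|\cJ(\Theta)\|_{\max} = O(p^2)$, which is the claim. There is no genuine obstacle here; all of the content lives in Lemma \ref{lem:coefficients-bound}, and the corollary merely records its practical consequence, namely that controlling the entrywise size of the Choi matrix controls the diamond norm up to fixed dimensional factors. This is the form in which the lemma will be applied later, where $\Theta$ arises as a difference of channels whose Choi matrices are shown to agree to first order in the noise parameter.
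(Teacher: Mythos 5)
Your proposal is correct and matches the paper's intent exactly: the paper states this as an immediate consequence of Lemma \ref{lem:coefficients-bound}, giving no separate proof, and your argument --- fixed dimensional prefactor $|A|\,|B|^2$, max norm as a maximum over finitely many entries each $O(p^2)$ --- is precisely that implicit reasoning. Your remark that finiteness of the index set is the one point requiring care is a sensible addition, not a deviation.
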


We also use the following continuity result for the isometric extensions for channels \cite{KSW08} : 
\begin{theorem}[\cite{KSW08}]\label{thm:stinespring-continuity}
	For quantum channels $\cN_1$ and $\cN_2$,
	\begin{align}
	\inf_{V_1,V_2} \|V_1 - V_2\|_{\infty}^2 \leq \|\cN_1 - \cN_2 \|_{\dn} \leq 2\inf_{V_1,V_2} \|V_1 - V_2\|_\infty,
	\end{align}
	where the infimum is over isometric extensions $V_i$ of $\cN_i$ for $i=1,2$, respectively.
\end{theorem}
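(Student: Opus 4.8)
The statement splits into an upper and a lower bound on the diamond norm; the upper bound is elementary and the lower bound carries all the content. For the upper bound $\|\cN_1-\cN_2\|_\dn \le 2\inf_{V_1,V_2}\|V_1-V_2\|_\infty$ I would argue by telescoping. Fixing any isometric extensions $V_i\colon\cH_A\to\cH_B\ox\cH_E$ into a common environment (padding the smaller one with zeros) and writing $A_i = I_{\mathbb{C}^n}\ox V_i$, the contractivity of the partial trace under the trace norm gives, for any $M$ with $\|M\|_1=1$,
\begin{align}
\|(\id\ox\cN_1)(M)-(\id\ox\cN_2)(M)\|_1 \le \|A_1 M A_1^\dagger - A_2 M A_2^\dagger\|_1.
\end{align}
Writing $A_1 M A_1^\dagger-A_2 M A_2^\dagger=(A_1-A_2)M A_1^\dagger+A_2 M (A_1-A_2)^\dagger$ and applying $\|XYZ\|_1\le\|X\|_\infty\|Y\|_1\|Z\|_\infty$ with $\|A_i\|_\infty=1$ bounds the right-hand side by $2\|V_1-V_2\|_\infty$; taking the infimum over dilations closes this direction.

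The lower bound $\inf_{V_1,V_2}\|V_1-V_2\|_\infty^2\le\|\cN_1-\cN_2\|_\dn$ is where I expect the real work. First I would use the unitary freedom in the Stinespring dilation (enlarging $E$ when needed) to fix reference dilations $V_1,V_2$ and write $\beta^2\coloneqq\inf_{V_1,V_2}\|V_1-V_2\|_\infty^2=\inf_U\|V_1-(I_B\ox U)V_2\|_\infty^2$, with $U$ a unitary on $E$. Expanding the operator norm over unit inputs (which may include a reference system $R$, since this does not change the operator norm) and using $V_i^\dagger V_i=I$ turns this into the bilinear form
\begin{align}
\beta^2=\inf_U\ \sup_{\rho}\ 2\Bigl(1-\mathrm{Re}\,\tr\bigl[\rho\,V_1^\dagger(I_B\ox U)V_2\bigr]\Bigr),
\end{align}
where $\rho$ ranges over density matrices on $RA$. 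Uhlmann's theorem says that for each fixed $\rho$ the best $U$ makes the overlap equal to the output fidelity, but the optimal $U$ depends on $\rho$ while a dilation must commit to a single $U$; reconciling this quantifier order is the crux.

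To handle the swap I would invoke Sion's minimax theorem. The objective is affine in $\rho$ over the convex compact set of density matrices and affine in $U$; relaxing the unitaries to the operator-norm unit ball of contractions makes the $U$-domain convex and compact, so the infimum and supremum may be exchanged. The inner optimization over contractions is linear and hence attained at an extreme point — a unitary — so Uhlmann identifies it as $\sup_\rho 2\bigl(1-F(\sigma_1,\sigma_2)\bigr)$ with $\sigma_i\coloneqq(\id_R\ox\cN_i)(\rho)$. This quantity lower-bounds $\beta^2$. The remaining subtlety, and the step I expect to be most delicate, is to recover a genuine unitary dilation from the optimal contraction $U^\ast$: I would dilate $U^\ast$ to a unitary $\widetilde U$ on a doubled environment (a Halmos dilation), and observe that $V_1^\dagger(I_B\ox\widetilde U)V_2=V_1^\dagger(I_B\ox U^\ast)V_2$ because $V_1,V_2$ populate only the original block. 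The enlarged pair $(V_1,V_2)$ together with $\widetilde U$ is then an admissible dilation achieving the relaxed value, which yields the matching upper bound and hence the equality $\beta^2=\sup_\rho 2\bigl(1-F(\sigma_1,\sigma_2)\bigr)$.

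Finally I would apply the Fuchs–van de Graaf inequality $2\bigl(1-F(\sigma_1,\sigma_2)\bigr)\le\|\sigma_1-\sigma_2\|_1$ to the output states and take the supremum over inputs, giving $\beta^2=\sup_\rho 2\bigl(1-F(\sigma_1,\sigma_2)\bigr)\le\sup_\rho\|\sigma_1-\sigma_2\|_1=\|\cN_1-\cN_2\|_\dn$, as claimed. In summary, the upper bound is a one-line telescoping estimate, while the lower bound reduces, via Uhlmann's theorem and Sion's minimax theorem, to a minimal channel fidelity; the main obstacle is the minimax exchange together with the contraction-to-unitary dilation that restores an admissible Stinespring isometry.
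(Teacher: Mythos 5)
Your proposal is correct and follows essentially the same route as the source of this theorem, which the paper itself imports from \cite{KSW08} without proof: the telescoping estimate with $\|XYZ\|_1\le\|X\|_\infty\|Y\|_1\|Z\|_\infty$ for the upper bound, and for the lower bound the identification of the optimal-dilation (Bures) distance with the minimal output fidelity via a minimax exchange over the contraction ball, Uhlmann's theorem, and a Halmos-type unitary dilation of the optimal contraction, finished off by Fuchs--van de Graaf. The only remark worth adding is that for the direction actually claimed you need only $\inf_{V_1,V_2}\|V_1-V_2\|_\infty^2\le\sup_\rho 2\bigl(1-F(\sigma_1,\sigma_2)\bigr)$, so the completeness of the fixed-dilation parametrization $\inf_U\|V_1-(I_B\ox U)V_2\|_\infty$ (which strictly requires a sufficiently large common environment) can be bypassed entirely: each pair $\bigl(V_1,(I_B\ox\widetilde U)V_2\bigr)$ is an admissible pair of isometric extensions, and that is all your chain of inequalities uses.
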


A simple consequence of \Cref{thm:stinespring-continuity} is that for two quantum channels that are close in the diamond norm, their complementary channels can be made similarly close to one another:
\begin{corollary}\label{cor:complementary-channel-norm}
	Let $\cN_1, \cN_2$ be quantum channels with $\|\cN_1 - \cN_2\|_\dn \leq \eps$ for some $\eps\in[0,2]$.
	Then for any complementary channel $\cN_1^c$ of $\cN_1$, there exists a complementary channel $\cN_2^c$ of $\cN_2$ such that
	\begin{align}
	\|\cN_1^c - \cN_2^c\|_\dn \leq 2\sqrt{\eps}.
	\end{align}
\end{corollary}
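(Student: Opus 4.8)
The plan is to invoke \Cref{thm:stinespring-continuity} twice, exploiting the symmetry of the Stinespring dilation: a single isometric extension $V\colon\cH_A\to\cH_B\ox\cH_E$ is \emph{simultaneously} an isometric extension of $\cN$ (with environment $E$) and of its complement $\cN^c$ (with environment $B$), since $\cN(\rho)=\tr_E(V\rho V^\dagger)$ while $\cN^c(\rho)=\tr_B(V\rho V^\dagger)$. Thus controlling the operator-norm distance of a pair of dilations of $\cN_1,\cN_2$ automatically controls it for $\cN_1^c,\cN_2^c$ as well, and I can pass between diamond-norm and operator-norm distances using the two inequalities in \Cref{thm:stinespring-continuity}.

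First I would fix the isometric extension $V_1\colon\cH_A\to\cH_B\ox\cH_E$ realizing the prescribed complementary channel, so that $\cN_1^c=\tr_B(V_1\,\cdot\,V_1^\dagger)$. Applying the lower bound of \Cref{thm:stinespring-continuity} to $\cN_1,\cN_2$ and using $\|\cN_1-\cN_2\|_\dn\le\eps$ yields $\inf_{W_1,W_2}\|W_1-W_2\|_\infty^2\le\eps$, the infimum being over isometric extensions into a common (sufficiently large) environment; since this infimum runs over a compact set of isometries it is attained, giving extensions $\widetilde V_1$ of $\cN_1$ and $\widetilde V_2$ of $\cN_2$ with $\|\widetilde V_1-\widetilde V_2\|_\infty\le\sqrt\eps$.

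The one subtlety, which I expect to be the only genuine obstacle, is that this minimizing $\widetilde V_1$ need not coincide with the dilation $V_1$ I started from. I would resolve this using the environment gauge freedom: any two isometric extensions of $\cN_1$ differ by a left unitary on the environment, so $\widetilde V_1=(I_B\ox U)V_1$ for some unitary $U$ on $\cH_E$ (after padding the environments to a common dimension). Setting $V_2\coloneqq(I_B\ox U^\dagger)\widetilde V_2$ produces another isometric extension of $\cN_2$, and since the operator norm is unitarily invariant,
\begin{align}
\|V_1-V_2\|_\infty=\left\|(I_B\ox U)(V_1-V_2)\right\|_\infty=\|\widetilde V_1-\widetilde V_2\|_\infty\le\sqrt\eps,
\end{align}
using $(I_B\ox U)V_1=\widetilde V_1$ and $(I_B\ox U)V_2=\widetilde V_2$. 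I then define $\cN_2^c\coloneqq\tr_B(V_2\,\cdot\,V_2^\dagger)$, a complementary channel of $\cN_2$.

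Finally, by the symmetry noted at the outset, $V_1$ and $V_2$ are isometric extensions of $\cN_1^c$ and $\cN_2^c$ respectively (each with $\cH_B$ playing the role of environment), so the upper bound of \Cref{thm:stinespring-continuity}, applied now to the pair $\cN_1^c,\cN_2^c$, gives
\begin{align}
\|\cN_1^c-\cN_2^c\|_\dn\le 2\inf_{W_1,W_2}\|W_1-W_2\|_\infty\le 2\|V_1-V_2\|_\infty\le 2\sqrt\eps,
\end{align}
which is the claimed bound.
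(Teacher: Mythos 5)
Your proof is correct and is precisely the argument the paper intends: the corollary is stated there as a direct consequence of \Cref{thm:stinespring-continuity}, namely, use the lower bound to extract dilations at operator-norm distance $\sqrt{\eps}$, adjust by the environment gauge freedom so that the first dilation realizes the prescribed $\cN_1^c$, and apply the upper bound with the roles of $B$ and $E$ exchanged. Your explicit treatment of the gauge-fixing step (including the padding, which at worst replaces $\cN_1^c$ by an isometrically equivalent complement, consistent with the paper's convention that all complementary channels are equivalent) simply fills in the details the paper leaves implicit.
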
 

\subsection{Quantum and private capacities and approximate degradability}
\label{sec:appdeg}

The \emph{coherent information} of a state $\rho_A$ through a channel
$\cN \colon A \to B$ is defined as
\begin{equation}
I_c(\rho_A;\cN)
\coloneqq S(\cN(\rho_A)) - S(\cN^c(\rho_A)) \,,
\end{equation}
where $S(\sigma) \coloneqq {-}\tr\sigma \log \sigma$ denotes the von~Neumann 
entropy of $\sigma$, and $\log$ is base $2$ throughout this paper.
Note that the coherent information is independent of the choice of the
complementary channel.
The coherent information can be interpreted as follows.  Let 
$|\psi\>_{A'A}$ be a purification of $\rho_A$ (that is, 
$\tr_{A'} |\psi\>\<\psi|_{A'A} = \rho_A$).  
Then, $I_c(\rho_A;\cN) = \frac{1}{2} \left( I(A':B) - I(A':E) \right)$ 
where $I(A':B) = S(A') + S(B) - S(A'B)$ is the quantum mutual information 
between $A'$ and $B$, and similarly for $I(A':E)$, and the mutual 
information is evaluated on the state $\id_{A'} \otimes \cN (|\psi\>\<\psi|_{A'A})$.  
The coherent information of $\cN$, also called the $1$-shot coherent information, 
is the maximum over all input states,
\begin{align}
I_c(\cN)
\coloneqq \max_{\rho_A} I_c(\rho_A;\cN) \,.
\end{align}
The $n$-shot coherent information of $\cN$ is 
defined as $I_c^{(n)}(\cN) \coloneqq I_c(\cN^{\otimes n})$, and satisfies $I_c^{(n+m)}(\cN) \geq I_c^{(n)}(\cN) + I_c^{(m)}(\cN)$ for $n,m\in\mathbb{N}$.  
The \emph{quantum capacity theorem} \cite{Lloyd97,Shor02,Devetak05}
establishes that the quantum capacity of $\cN$ is given by the following regularized formula:
\begin{align}
Q(\cN) = \lim_{n\rightarrow\infty}  \frac{1}{n} \, I_c^{(n)}(\cN) = \sup_{n\in\mathbb{N}}\frac{1}{n}\,I_c^{(n)}(\cN)\,,
\label{eq:qcap}
\end{align}
where the second equality follows from Fekete's lemma \cite{Fek23}.
In general, the regularization in \eqref{eq:qcap} is necessary, and renders the quantum capacity intractable to compute for most channels  \cite{DiVincenzoSS98,Cubitt2015unbounded}. 
However, for the class of degradable channels \cite{Shor-Devetak-05}, the formula \eqref{eq:qcap} reduces to a single-letter formula.
A channel $\cN$ with complementary channel $\cN^c$ is {\em degradable} if there is another channel $\cM$ (called a degrading
map) such that $\cM \circ \cN = \cN^c$. 
For degradable channels, 
the coherent information is {\em weakly additive}, $I_c^{(n)}(\cN) = n I_c(\cN)$ \cite{Shor-Devetak-05}.
As a result, the limit in \eqref{eq:qcap} is unnecessary, and we have
\begin{align}
Q(\cN) = I_c(\cN).
\end{align}
Moreover, for a degradable channel $\cN$ the coherent information $I_c(\rho_A; \cN)$ is concave in the input state $\rho_A$ \cite{YHD08}, and thus $I_c(\cN)$ can be efficiently computed using standard optimization techniques.

Since degradable channels have nice properties that allow us to determine their quantum capacity, we might ask if some of these properties are approximately satisfied by channels that are ``almost'' degradable.
Reference \cite{SSWR15} formalized this idea by considering how close $\cM \circ \cN$ can be made to $\cN^c$ when optimizing over the channel $\cM$. 
\begin{definition}[\cite{SSWR15}; Approximate degradability]\label{def:approximate-degradability}
	A quantum channel $\cN\colon A\to B$ with environment $E$ is called \emph{$\eta$-degradable} for an $\eta\in[0,2]$ if there exists a quantum channel $\cM\colon B\to E$ such that
	\begin{align}
	\|\cN^c - \cM\circ \cN \|_\dn \leq \eta.
	\label{eq:approx-degradability}
	\end{align}
	The \emph{degradability parameter} $\dg(\cN)$ of $\cN$ is defined as the minimal $\eta$ such that \eqref{eq:approx-degradability} holds for some quantum channel $\cM\colon B\to E$, and $\cN$ is degradable if $\dg(\cN)=0$.
\end{definition}
Note that every quantum channel is $2$-degradable, since $\|\cN_1-\cN_2\|_\dn\leq 2$ for any two quantum channels $\cN_1$ and $\cN_2$.
The SDP \eqref{eq:diamond-norm-sdp} for the diamond norm can be used to express the degradability parameter $\dg(\cN)$ of \Cref{def:approximate-degradability} as the solution of the following SDP \cite{SSWR15}:
\begin{align}
\begin{aligned}
\text{\normalfont minimize: } & 2\mu\\
\text{\normalfont subject to: } & \tr_EZ_{AE} \leq \mu I_A\\
& \tr_E Y_{BE} = \one_B\\
& Z_{AE} \geq \cJ(\cN^c) - \cJ(\cJ^{-1}(Y_{BE})\circ\cN)\\
& Z_{AE} \geq 0, Y_{BE} \geq 0.
\end{aligned}
\label{eq:approx-deg-sdp}
\end{align}
The bipartite operator $Y_{BE}$ above corresponds to the Choi matrix of the approximate degrading map $\cM\colon B\to E$.

While a degradable channel has capacity equal to the $1$-shot coherent
information, an $\eta$-degradable channel $\cN$ has capacity 
differing from the $1$-shot coherent information at most by a vanishing 
function of $\eta$:
\begin{theorem}[\cite{SSWR15}, Thm.~3.3$\,$(i); Continuity bound]\label{thm:deg-cty}
If $\cN$ is a channel with degradability parameter $\dg(\cN) = \eta$, then, 
\begin{align}
I_c(\cN) \leq Q(\cN) \leq 
I_c(\cN) + \frac{\eta}{2} \log(|E|{-}1) + \eta \log|E| 
+ h \! \left(\frac{\eta}{2}\right) 
+ \left(1+\frac{\eta}{2}\right) h\left( \frac{\eta}{2+\eta} \right),
\label{eq:cty-bdd}
\end{align}
where $h(x)\,{:}{=}-x\log x-(1{-}x)\log(1{-}x)$ is the binary 
entropy function, and $|E|$ is the Choi rank of $\cN$.
\end{theorem}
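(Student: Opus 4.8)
The lower bound $I_c(\cN)\le Q(\cN)$ is immediate from the capacity formula \eqref{eq:qcap}, since $Q(\cN)=\sup_n \tfrac1n I_c^{(n)}(\cN)\ge I_c^{(1)}(\cN)=I_c(\cN)$. The content is the upper bound, and my plan is to establish the claimed inequality for the per-copy quantity $\tfrac1n I_c^{(n)}(\cN)$ for \emph{every} $n$, with a correction that does not grow with $n$, and then let $n\to\infty$. Fix $n$ and an optimal input with purification $|\psi\rangle_{A'^nA^n}$, and let $V^{\ox n}$ be an isometric extension of $\cN^{\ox n}$, yielding a pure state on $A'^nB^nE^n$, so that $I_c^{(n)}(\cN)=S(B^n)-S(E^n)$. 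Let $\cM$ be a degrading map with $\|\cN^c-\cM\circ\cN\|_\dn=\eta$, and let $\hat E^n$ carry the output of $(\cM\circ\cN)^{\ox n}=\cM^{\ox n}\circ\cN^{\ox n}$, obtained by further processing $B^n$. I would split
\[
I_c^{(n)}(\cN)=\big[S(B^n)-S(\hat E^n)\big]+\big[S(\hat E^n)-S(E^n)\big]
\]
and bound the two brackets separately.

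For the first bracket, the key structural point is that $\hat E^n$ is the image of $B^n$ under the \emph{product} channel $\cM^{\ox n}$, so $S(B^n)-S(\hat E^n)$ has exactly the degradable-channel structure exploited by Shor--Devetak. Applying the chain rule to both entropies and using that $\hat E_{<i}=\cM^{\ox(i-1)}(B_{<i})$ is a genuine function of $B_{<i}$ (whence conditioning on $B_{<i}$ only lowers the entropy of $\hat E_i$) gives $S(B^n)-S(\hat E^n)\le \sum_i\big[S(B_i|B_{<i})-S(\hat E_i|B_{<i})\big]\le n\max_\rho\big[S(\cN(\rho))-S(\cM\circ\cN(\rho))\big]$. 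Comparing the maximand with $I_c(\rho;\cN)=S(\cN(\rho))-S(\cN^c(\rho))$ and invoking the Audenaert--Fannes inequality, with $\|\cN^c(\rho)-\cM\circ\cN(\rho)\|_1\le\eta$ on the $|E|$-dimensional environment, yields $S(B^n)-S(\hat E^n)\le n\big[I_c(\cN)+\tfrac\eta2\log(|E|-1)+h(\tfrac\eta2)\big]$.

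For the second bracket I would telescope one copy at a time. Writing $H_k:=S(\hat E_1\cdots\hat E_k E_{k+1}\cdots E_n)$, so that $H_n=S(\hat E^n)$ and $H_0=S(E^n)$, each increment is a \emph{conditional}-entropy difference $H_k-H_{k-1}=S(\hat E_k|C_k)-S(E_k|C_k)$ with common side-information $C_k:=\hat E_{<k}E_{>k}$, since the two states share the same marginal on $C_k$. The states on $\hat E_kC_k$ and $E_kC_k$ differ only through the channel on the $k$-th copy, $\cM\circ\cN$ versus $\cN^c$, so their trace distance is at most $\eta/2$ by the \emph{single-copy} approximate degradability. Crucially, the non-conditioned system $\hat E_k\cong E_k$ is one copy of dimension $|E|$, so the Alicki--Fannes--Winter inequality bounds each increment by $\eta\log|E|+(1+\tfrac\eta2)h\!\big(\tfrac\eta{2+\eta}\big)$, independently of $n$. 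Summing over $k$ gives $S(\hat E^n)-S(E^n)\le n\big[\eta\log|E|+(1+\tfrac\eta2)h(\tfrac\eta{2+\eta})\big]$; adding the two brackets and dividing by $n$ reproduces \eqref{eq:cty-bdd}.

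The main obstacle — and the reason the two elementary continuity estimates must be deployed so carefully — is making the correction survive the regularization $n\to\infty$. A naive comparison of the full $n$-copy entropies $S(\hat E^n)$ and $S(E^n)$ fails twice over: the tensor-power diamond distance degrades to $n\eta$, and the unconditional Fannes bound carries the total dimension $|E|^n$, so the error would scale like $n^2$ and blow up after dividing by $n$. Both difficulties are circumvented precisely by the telescoping above, which replaces the global comparison by $n$ single-copy comparisons, each governed by the single-copy distance $\eta$ and — through the \emph{conditional} entropy form — by the single-copy dimension $|E|$ rather than $|E|^n$.
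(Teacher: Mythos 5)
The paper states this theorem by citation to \cite{SSWR15} without reproducing a proof, and your argument is correct and essentially reconstructs the cited one: a hybrid/telescoping decomposition arranged so that every continuity step involves only the \emph{single-copy} diamond distance $\eta$ and a \emph{single} $|E|$-dimensional system, with Audenaert--Fannes supplying the $\frac{\eta}{2}\log(|E|-1)+h\!\left(\frac{\eta}{2}\right)$ terms and Alicki--Fannes--Winter the $\eta\log|E|+\left(1+\frac{\eta}{2}\right)h\!\left(\frac{\eta}{2+\eta}\right)$ terms, exactly as in the source (which organizes the telescoping slightly differently, via the identity $S(B^n)-S(E^n)=\sum_j\bigl[S(B_j|B_{>j}E_{<j})-S(E_j|B_{>j}E_{<j})\bigr]$, but with identical constants). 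The one step you compress, $S(B_i|B_{<i})-S(\hat{E}_i|B_{<i})\le \max_\rho\bigl[S(\cN(\rho))-S(\cM\circ\cN(\rho))\bigr]$, is the standard Shor--Devetak single-letterization and follows from a Stinespring dilation $W\colon B\to \hat{E}F$ of $\cM$ together with strong subadditivity, $S(F|\hat{E}B_{<i})\le S(F|\hat{E})$, so no genuine gap remains.
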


The private capacity of $\cN$, denoted by $P(\cN)$, is the capacity of
$\cN$ to transmit classical data with vanishing probability of error
such that the state in the joint environment of all channel uses has
vanishing dependence on the input.  The capacity expression is found
to be the regularized private information for $\cN$ in
\cite{Devetak05}, but the private information is not additive
(\cite{SRS08,Li09, SS09}).  For degradable channels, $P(\cN) =
I_c(\cN)$, and there is a continuity result similar to Theorem
\ref{thm:deg-cty}:
\begin{theorem}[\cite{SSWR15}, Thm.~3.3$\,$(iii) and (v) combined]\label{thm:deg-cty-p}
If $\cN$ is a channel with degradability parameter $\dg(\cN) = \eta$, then, 
\begin{align}
I_c(\cN) \leq P(\cN) \leq 
I_c(\cN) + \eta \log(|E|{-}1) + 4 \eta \log|E| 
+ 2 \, h \! \left(\frac{\eta}{2}\right) 
+ 4 \left(1+\frac{\eta}{2}\right) h\left( \frac{\eta}{2+\eta} \right),
\label{eq:cty-bdd-private}
\end{align}
where $|E|$ and $h$ are as defined in Theorem \ref{thm:deg-cty}.
\end{theorem}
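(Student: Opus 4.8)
The plan is to follow the strategy behind the quantum-capacity bound of \Cref{thm:deg-cty}, now applied to the regularized private information $P(\cN)=\lim_{n\to\infty}\frac1n P^{(1)}(\cN^{\otimes n})$, where $P^{(1)}$ is the one-shot private information. The lower bound is immediate from $I_c(\cN)\le Q(\cN)\le P(\cN)$, so only the upper bound requires work. Fix a degrading map $\cM\colon B\to E$ achieving $\|\cN^c-\cM\circ\cN\|_\dn\le\eta$. For an optimal input ensemble $\{p_x,\rho_x\}$ on $A^n$ I form the output classical-quantum state on $XB^nE^n$, introduce the locally degraded environment $\widetilde E^n\coloneqq\cM^{\otimes n}(B^n)$, and use the standard identity to split
\begin{align}
P^{(1)}(\cN^{\otimes n})
= I(X:B^n)-I(X:E^n)
= \bigl[I(X:B^n)-I(X:\widetilde E^n)\bigr]
+\bigl[I(X:\widetilde E^n)-I(X:E^n)\bigr].
\end{align}

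The first bracket is the private information of a genuinely degradable configuration, because $\widetilde E^n$ is obtained from $B^n$ by the \emph{local} map $\cM^{\otimes n}$. Introducing the degraded coherent information $\widehat I_c(\rho)\coloneqq S(\cN(\rho))-S(\cM(\cN(\rho)))$ and its maximum $\widehat I_c(\cN)\coloneqq\max_\rho\widehat I_c(\rho)$, I would use that $\widehat I_c$ is both concave in $\rho$ and additive under tensor powers. Crucially, both properties follow from data processing applied copy-wise: concavity reduces to $I(X:B)\ge I(X:\widetilde E)$ and additivity to $I(B_i:B_j)\ge I(\widetilde E_i:\widetilde E_j)$, each holding because $\widetilde E=\cM(B)$ acts on each copy separately. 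These are exactly the ingredients that give $P=I_c$ for degradable channels \cite{Smith08}, so the first bracket is at most $\widehat I_c(\cN^{\otimes n})=n\,\widehat I_c(\cN)$. A \emph{single-copy} Fannes-type estimate, invoking $\|\cN^c-\cM\circ\cN\|_\dn\le\eta$ at environment dimension $|E|$, then gives $\widehat I_c(\cN)\le I_c(\cN)+O(\eta\log|E|)$.

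For the second bracket I would telescope over the $n$ copies, replacing $E_k$ by $\widetilde E_k$ one at a time and writing the untouched systems as $R$. Since the $XR$ marginal is unchanged at each step, the chain rule collapses the $k$-th increment to $I(X:\widetilde E_k\mid R)-I(X:E_k\mid R)$, a difference of conditional mutual informations in which only a single-copy system of dimension $|E|$ is perturbed, by two $\eta$-close states. The Alicki--Fannes--Winter bound controls each increment by $O(\eta\log|E|)$ \emph{with the single-copy dimension $|E|$}, so the $n$ increments sum to $n\cdot O(\eta\log|E|)$.

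Adding the two brackets and dividing by $n$, the explicit factors of $n$ cancel and $n\to\infty$ yields $P(\cN)\le I_c(\cN)+(\text{corrections in }\eta,|E|)$; tracking the constants through the two Fannes/AFW estimates produces \eqref{eq:cty-bdd-private}, whose coefficients are larger than those of \eqref{eq:cty-bdd} because the private information collects continuity corrections from \emph{both} brackets and each mutual-information term resolves into two entropic pieces. I expect the regularization to be the main obstacle: the naive route of degrading all $n$ copies at once with $\cM^{\otimes n}$ incurs a continuity error of order $n\eta\cdot\log|E^n|=n^2\eta\log|E|$, which survives division by $n$. The whole point is therefore to localize every continuity estimate to one copy --- through additivity of $\widehat I_c$ for the first bracket and the conditional-entropy chain rule for the second --- so that all corrections take the form $n$ times a single-copy quantity.
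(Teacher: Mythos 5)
Note first that the paper does not prove \Cref{thm:deg-cty-p} at all: it is imported verbatim from \cite{SSWR15} (Thm.~3.3\,(iii) and (v)), so your attempt must be judged as a reconstruction of that external proof. Your overall architecture does mirror it: insert the locally degraded environment $\widetilde{E}^n=\cM^{\otimes n}(B^n)$, treat the degraded-pair bracket by copy-wise data processing, and localize every continuity estimate to a single copy so as to avoid the $n^2\eta\log|E|$ trap you correctly flag at the end. Your second bracket is handled soundly: in each hybrid step only the channel on copy $k$ changes, the $XR$ marginal is untouched, so the increment collapses to $I(X:\widetilde{E}_k\mid R)-I(X:E_k\mid R)$, a difference of two conditional-entropy pairs of a single $|E|$-dimensional system between states that are $\eta/2$-close in trace distance, and Alicki--Fannes--Winter applies at dimension $|E|$.

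The genuine gap is in the first bracket. Writing $\bar\rho=\sum_x p_x\rho_x$, one has $I(X:B^n)-I(X:\widetilde{E}^n)=\widehat{I}_c^{(n)}(\bar\rho)-\sum_x p_x\,\widehat{I}_c^{(n)}(\rho_x)$, so your claimed bound by $\widehat{I}_c(\cN^{\otimes n})=n\,\widehat{I}_c(\cN)$ requires $\sum_x p_x\,\widehat{I}_c^{(n)}(\rho_x)\geq 0$. For a genuinely degradable channel this holds because $E$ is the \emph{true} complement: purity of the joint $BE$ output (equivalently $I_c=\tfrac12\left(I(A':B)-I(A':E)\right)$ plus data processing) gives $I_c(\rho_x)\geq 0$. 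For the pseudo-environment $\widetilde{E}=\cM(B)$ this ingredient fails: $\widehat{I}_c(\rho)=S(\cN(\rho))-S(\cM\circ\cN(\rho))$ can be negative, and concavity --- which, as you note, is equivalent to $I(X:B)\geq I(X:\widetilde{E})$ --- bounds the bracket from \emph{below} by zero, not from above. Concretely, let $\cN$ be a perfect qubit channel with isometric extension $U\ox|0\rangle_E$ (so $\cN^c(\rho)=\tr(\rho)\,|0\rangle\langle 0|_E$) and let $\cM$ output the maximally mixed qubit $\pi_E$; this is a legitimate degraded pair with $\eta=\||0\rangle\langle0|-\pi\|_1=1$, yet the bracket at level $n$ equals $I(X:B^n)$ and reaches $n$ for a classical basis ensemble, while $n\,\widehat{I}_c(\cN)=0$. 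The deficit is of size $n\,\eta\log|E|$ --- first order in $\eta$ --- so it cannot be hidden in your error terms; it is precisely one of the sources of the larger coefficients in \eqref{eq:cty-bdd-private} relative to \eqref{eq:cty-bdd}. The repair uses exactly the per-copy technique you already deploy for the second bracket: reduce to pure-state ensembles (refining can only increase the bracket, by data processing through $\cM$), use purity of the $B^nE^n$ output to write $\widehat{I}_c^{(n)}(\psi_x)=S(E^n_x)-S(\widetilde{E}^n_x)$, and telescope that difference copy-by-copy, each increment being a conditional-entropy difference of one $|E|$-dimensional system between $\eta/2$-close states. With the resulting extra $n\cdot O(\eta\log|E|)$ term restored (and constants actually tracked, which your sketch asserts rather than derives), the accounting closes and produces a bound of the stated form.
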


\section{General low-noise channel}\label{sec:general}

Throughout this paper we focus on {\em low-noise channels}, by which we mean a channel $\cN$ that has isomorphic
input and output spaces $A$ and $B$ and approximates the noiseless (or identity)
channel in the diamond norm, 
$\| \cN - \id_A \|_\dn \leq \eps$, where $\eps>0$ is a small
positive parameter.

\subsection{Deviation of quantum capacity from 1-shot coherent information}

We start from \Cref{thm:deg-cty} in Section \ref{sec:appdeg}, which
gives a ``continuity bound'' on the difference between the quantum
capacity and the 1-shot coherent information for an arbitrary channel
$\cN$ with degradability parameter $\dg(\cN) = \eta$:
 \begin{align}
|Q(\cN) - I_c(\cN)| \leq f_1(\eta), 
\end{align}
where 
\begin{align}
f_1(\eta) \coloneqq \frac{\eta}{2} \log(|E|-1) + \eta \log|E| + h \! \left(\frac{\eta}{2}\right) + \left(1+\frac{\eta}{2}\right) h\left( \frac{\eta}{2+\eta} \right),
\label{eq:error-term}
\end{align} 
satisfying $\lim_{\eta\to 0}f_1(\eta) = 0$.

If $\cN$ is ``almost'' degradable, then $\eta \approx 0$.
To investigate the behavior of $f_1(\eta)$ in this regime, we keep the $\eta\log\eta$ terms in $f_1(\eta)$ (knowing that these will dominate for small $\eta$), and develop the rest in a Taylor series around $0$.
For example, the first binary entropy term is expanded as
\begin{align}
h \! \left(\frac{\eta}{2}\right) = \frac{1}{2}\left(1+(\ln 2)^{-1}-\log \eta\right) \eta - \frac{1}{8} (\ln 2)^{-1} \eta^2 + O(\eta^3).
\end{align}
The second binary entropy term (including the prefactor) is expanded as
\begin{align}
\left(1+\frac{\eta}{2}\right) h\left( \frac{\eta}{2+\eta} \right) = \frac{1}{2}\left(1+(\ln 2)^{-1}-\log \eta\right) \eta + \frac{1}{8} (\ln 2)^{-1} \eta^2 + O(\eta^3).
\end{align}
The quadratic contributions cancel out in an expansion of $f_1(\eta)$, and including the two linear terms in \eqref{eq:error-term} gives
\begin{align}
f_1(\eta) = -\eta\log\eta + \left( 1 + (\ln 2)^{-1} + \frac{1}{2}\log(|E|-1) + \log |E| \right)\eta + O(\eta^3).\label{eq:error-term-expanded}
\end{align}
It follows that for small $\eta$ the function $f_1(\eta)$ is dominated by $g(\eta) \,{:}{=}\, -\eta \log \eta$ 
which has infinite slope at $\eta=0$: 
\begin{align}
g'(\eta) = -\log(\eta) - \frac{1}{\ln 2}\xrightarrow{\eta \, \to \, 0} +\infty.
\label{dgdeta}
\end{align} 
Hence, without further information on $\eta$, \Cref{thm:deg-cty} does not give 
a tight approximation of the quantum capacity relative to the 1-shot coherent 
information.  

Instead, we consider a scenario in which the channel $\cN(p)$ depends
on some underlying noise parameter $p\in[0,1]$ and 
$\eta = \dg(\cN(p)) \leq c p^r$ where $r>1$ and $c$ is a constant.
(Note that $f_1(\eta)$ increases with $\eta$ for small $\eta$, 
so, it suffices to consider $\eta = c p^r$.)  
In this case, the approximation implied by \Cref{thm:deg-cty} becomes
extremely useful -- we will show that the upper bound 
$I_c(\cN(p)) + f_1(cp^r)$ on $Q(\cN(p))$ is now \emph{tangent} to the
$1$-shot coherent information $I_c(\cN(p))$.
\begin{lemma} \label{lem:fderivative}
If $r>1$ and $c$ is a constant, then $\frac{d}{dp}f_1(cp^r)\Big|_{p=0} = 0$.  
\end{lemma}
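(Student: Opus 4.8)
The plan is to show that the composed function $F(p) \coloneqq f_1(cp^r)$ is differentiable at $p=0$ with $F'(0)=0$, working directly from the difference quotient rather than naively applying the chain rule---the latter is delicate precisely because $f_1$ itself has infinite slope at $\eta=0$, as recorded in \eqref{dgdeta}. First I would note that $F(0)=f_1(0)=0$: the leading term satisfies $\lim_{\eta\to 0^+}(-\eta\log\eta)=0$, and every remaining contribution in \eqref{eq:error-term-expanded} is $O(\eta)$ and hence also vanishes. Since $p^r$ is defined only for $p\geq 0$ when $r$ is not an integer, the relevant object is the right-hand derivative, which I would compute as
\[
F'(0) = \lim_{p\to 0^+}\frac{f_1(cp^r)-0}{p}.
\]

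Next I would substitute the expansion \eqref{eq:error-term-expanded}, namely $f_1(\eta)=-\eta\log\eta + C_1\,\eta + O(\eta^3)$ with $C_1 = 1+(\ln 2)^{-1}+\tfrac12\log(|E|-1)+\log|E|$, evaluated at $\eta=cp^r$. Writing $\log(cp^r)=\log c + r\log p$, the difference quotient becomes
\[
\frac{f_1(cp^r)}{p} = -c\,p^{r-1}\bigl(\log c + r\log p\bigr) + C_1\,c\,p^{r-1} + O\!\left(p^{3r-1}\right),
\]
so the proof reduces to taking the limit $p\to 0^+$ term by term.

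The one nontrivial limit---and the point where the hypothesis $r>1$ is essential---is $\lim_{p\to 0^+}p^{r-1}\log p = 0$, which holds precisely because $r-1>0$ (the standard fact that $p^a\log p\to 0$ as $p\to 0^+$ for every $a>0$). The remaining pieces are immediate: $p^{r-1}\to 0$, $p^{3r-1}\to 0$, and $p^{r-1}\log c\to 0$, since all exponents are strictly positive. Every term therefore vanishes, yielding $F'(0)=0$, as claimed. I would close by remarking that this is exactly the mechanism converting a sublinear bound $\eta\leq cp^r$ with $r>1$ into an upper bound tangent to $I_c$ at $p=0$; had $r=1$, the surviving term would be $p^{r-1}\log p=\log p$, which diverges, and the conclusion would fail---matching the observation in \eqref{dgdeta} that $g(\eta)=-\eta\log\eta$ has infinite slope at the origin.
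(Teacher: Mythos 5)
Your proof is correct, and it takes a slightly different (and in one respect more careful) route than the paper. The paper applies the chain rule for $p>0$, computing $\frac{d}{dp}g(cp^r) = \left(-\log(cp^r)-\frac{1}{\ln 2}\right)crp^{r-1}$, adds the derivative of the linear term from \eqref{eq:error-term-expanded}, and then takes $\lim_{p\to 0}$; you instead evaluate the one-sided difference quotient $\lim_{p\to 0^+} f_1(cp^r)/p$ directly, after checking $f_1(0)=0$. Both arguments reduce to the same two ingredients --- the expansion \eqref{eq:error-term-expanded} of $f_1$ and the elementary limit $p^{a}\log p \to 0$ for $a>0$ --- so the computational content is identical. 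What your version buys is that it establishes $F'(0)=0$ as a genuine derivative at the endpoint without any intermediate step: strictly speaking, the paper's conclusion that the limit of $\frac{d}{dp}f_1(cp^r)$ as $p\to 0^+$ equals the derivative at $p=0$ requires an appeal to continuity of $F$ at $0$ together with the mean value theorem, a step the paper leaves implicit. What the paper's version buys is that the explicit formula for $\frac{d}{dp}g(cp^r)$ is reused for the plots in \Cref{fig:g-derivative}, which your difference-quotient argument does not produce. One shared implicit hypothesis in both arguments: writing $\log(cp^r)=\log c + r\log p$ presumes $c>0$ (the case $c=0$ being trivial since then $F\equiv 0$), which is harmless here because $cp^r$ bounds a nonnegative degradability parameter, but is worth stating.
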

\begin{proof}
	From \eqref{dgdeta} and the chain rule, we obtain 
	\begin{align}
	\frac{d}{dp}g(cp^r) = \left(-\log(cp^r) - \frac{1}{\ln 2} \right) c rp^{r-1}
        \end{align}
	Hence, $\lim_{p\to 0} \frac{d}{dp}g(cp^r) = 0$ for any $r>1$.
        Finally, 
        \begin{align}
        \frac{d}{dp}f_1(cp^r) = \frac{d}{dp}g(cp^r) + 
        \left( 1+\left(\ln 2\right)^{-1} + \log|E| + \frac{1}{2}\log(|E|-1) \right) c rp^{r-1} + O(p^{3r{-}1}).
        \end{align}
        So, $\frac{d}{dp}f_1(cp^r)\Big|_{p=0} = 0$ as claimed.  
\end{proof}

\Cref{fig:g-derivative} and \Cref{fig:g-function} plot $\frac{d}{dp}g(cp^r)$ and $g(cp^r)$, respectively, for the values $c=1$ and $r\in\lbrace 1, 1.5, 2\rbrace$.

\begin{figure}[ht]
	\centering
	\begin{tikzpicture}
	\begin{axis}[
		scale = 1.2,
		domain=0.00001:0.1,
		axis lines = left,
		scaled ticks=false, 
		tick label style={/pgf/number format/fixed}, 
		legend cell align = left,
		grid = major,
		samples = 150,
		ymax = 6,
		xmin = 0,
		xlabel = $p$,
		every axis x label/.style={at={(current axis.right of origin)},anchor=west,right = 0.3cm}, 
		ylabel = $\frac{d}{dp}g(p^r)$,
		ylabel style={rotate=-90},
		every axis y label/.style={at={(current axis.north west)},above=0mm, left = 0.3cm},
	] 
	\addplot[smooth,mark=none, color = gray, thick] {-ln(x)/ln(2)-1.4427};
	\addplot[smooth,mark=none, color = blue, thick] {(-ln(x^1.5)/ln(2)-1.4427)*1.5*x^0.5};
	\addplot[smooth,mark=none, color = red, thick] {(-ln(x^2)/ln(2)-1.4427)*2*x};
	\legend{$r=1$,$r=1.5$,$r=2$};
	\end{axis}
	\end{tikzpicture}
	\caption{Plot of the derivative of $g(p^r)$ with respect to $p$ for $r\in \lbrace 1,1.5,2\rbrace$, where $g(\eta) =-\eta\log\eta$.}
	\label{fig:g-derivative} 
\end{figure}
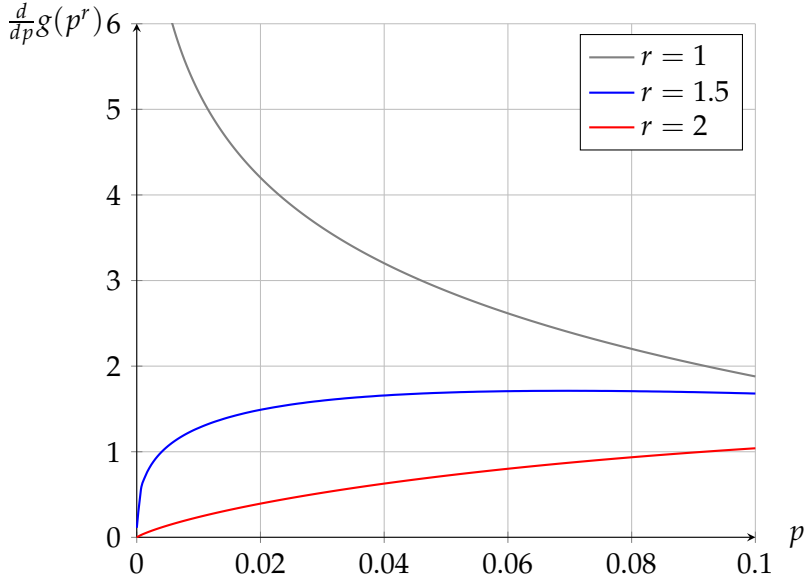

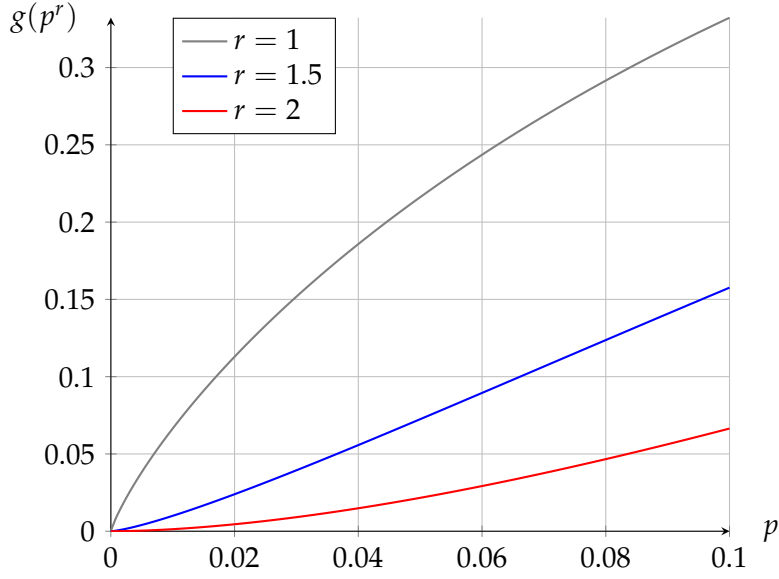
\begin{figure}[ht]
	\centering	
	\begin{tikzpicture}
	\begin{axis}[
	scale = 1.2,
	domain=0.00001:0.1,
	axis lines = left,
	scaled ticks=false, 
	tick label style={/pgf/number format/fixed}, 
	legend cell align = left,
	legend style = {at = {(0.1,1)},anchor = north west},
	grid = major,
	samples = 150,
	xlabel = $p$,
	every axis x label/.style={at={(current axis.right of origin)},anchor=west,right = 0.3cm}, 
	ylabel = $g(p^r)$,
	ylabel style={rotate=-90},
	every axis y label/.style={at={(current axis.north west)},above=0mm, left = 0.3cm},
	ytick = {0,0.05,0.1,0.15,0.2,0.25,0.3},
	]
	\addplot[smooth,mark=none, color = gray, thick] {-x*ln(x)/ln(2)};
	\addplot[smooth,mark=none, color = blue, thick] {-x^1.5*ln(x^1.5)/ln(2)};
	\addplot[smooth,mark=none, color = red, thick] {-x^2*ln(x^2)/ln(2)};
	\legend{$r=1$,$r=1.5$,$r=2$};
	\end{axis}
	\end{tikzpicture}
	\caption{Plot of the function $g(p^r)$ for $r\in \lbrace 1,1.5,2\rbrace$, where $g(\eta) =-\eta\log\eta$.}
	\label{fig:g-function}
\end{figure}

We also simplify \eqref{eq:error-term-expanded} as
\begin{lemma}
	Let $\cN = \cN(p)$ be a channel defined in terms of a noise
        parameter $p\in [0,1]$, and $\dg(\cN) \leq c p^r$ for some
        $r>1$ and some constant $c$. Then, we have
	\begin{multline}
	| Q(\cN) - I_c(\cN) | \leq 
         crp^{r{-}1} (-p\log p) \\
        + cp^r\left( -\log c + 1+\left(\ln 2\right)^{-1} 
                             + \log|E| + \frac{1}{2}\log(|E|{-}1) \right)
+ O(p^{3r}).
\end{multline} 
\label{lem:qbound1}
\end{lemma}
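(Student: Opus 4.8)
The plan is to feed the hypothesis $\dg(\cN) \leq cp^r$ directly into the continuity bound of \Cref{thm:deg-cty} and then expand in $p$. First I would note that \Cref{thm:deg-cty} already gives $I_c(\cN) \leq Q(\cN) \leq I_c(\cN) + f_1(\eta)$ with $\eta = \dg(\cN)$. The lower bound shows $Q(\cN) - I_c(\cN) \geq 0$, so $|Q(\cN) - I_c(\cN)| = Q(\cN) - I_c(\cN) \leq f_1(\dg(\cN))$, and it therefore suffices to bound $f_1(\dg(\cN))$ as a function of $p$.

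Next I would use the monotonicity of $f_1$ near the origin to replace $\dg(\cN)$ by its upper bound $cp^r$. From the expansion \eqref{eq:error-term-expanded}, for sufficiently small argument $f_1$ is dominated by $-\eta\log\eta$ and is strictly increasing, so there is a threshold below which $f_1$ is monotone. For $p$ small enough, $cp^r$ lies below this threshold, and since $\dg(\cN) \leq cp^r$ we obtain $f_1(\dg(\cN)) \leq f_1(cp^r)$. This is precisely the parenthetical remark preceding the statement.

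The remaining step is a direct substitution of $\eta = cp^r$ into \eqref{eq:error-term-expanded} followed by regrouping. The only manipulation of substance is the leading term,
\begin{align}
-\eta\log\eta = -cp^r\log(cp^r) = crp^{r-1}(-p\log p) - cp^r\log c,
\end{align}
after which the constant $-cp^r\log c$ is absorbed into the coefficient of the linear-in-$\eta$ term, producing the factor $-\log c + 1 + (\ln 2)^{-1} + \log|E| + \frac{1}{2}\log(|E|-1)$ multiplying $cp^r$. The error term transforms as $O(\eta^3) = O((cp^r)^3) = O(p^{3r})$ since $c$ and $r$ are fixed constants. Collecting these pieces reproduces the claimed inequality.

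I expect no serious obstacle, since the heavy lifting — the Taylor expansion of $f_1$ and the cancellation of the quadratic terms — is already carried out in \eqref{eq:error-term-expanded}. The only points requiring a little care are (i) confirming that $cp^r$ falls within the interval of monotonicity of $f_1$ for sufficiently small $p$, so that the substitution $f_1(\dg(\cN)) \leq f_1(cp^r)$ is legitimate, and (ii) tracking the constants when converting $O(\eta^3)$ into $O(p^{3r})$.
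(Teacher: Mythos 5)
Your proposal is correct and follows essentially the same route as the paper: the paper itself notes that $f_1$ is increasing for small $\eta$ (so it suffices to consider $\eta = cp^r$) and then obtains the lemma by substituting into the expansion \eqref{eq:error-term-expanded}, splitting $-cp^r\log(cp^r) = crp^{r-1}(-p\log p) - cp^r\log c$ exactly as you do. The two points you flag for care (monotonicity of $f_1$ near the origin and $O(\eta^3) = O(p^{3r})$) are handled identically in the paper.
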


For the private capacity, we have the following from \Cref{thm:deg-cty-p}:
\begin{align}
|P(\cN)-I_c(\cN)| \leq f_2(\eta),
\end{align}
where
\begin{align}
f_2(\eta) & \coloneqq \eta \log(|E|{-}1) + 4 \eta \log|E| 
+ 2 \, h \! \left(\frac{\eta}{2}\right) 
+ 4 \left(1+\frac{\eta}{2}\right) h\left( \frac{\eta}{2+\eta} \right) \\
        & = -3 \eta\log\eta + \left( 3 + 3(\ln 2)^{-1} + \log(|E|-1) + 4 \log |E| \right)\eta + O(\eta^2).
\end{align}
In a similar way as above, we can derive the following:
\begin{lemma}
	Let $\cN = \cN(p)$ be a channel defined in terms of a noise
        parameter $p\in [0,1]$, and $\dg(\cN) \leq c p^r$ for some
        $r>1$ and some constant $c$. Then,
	\begin{multline}
| P(\cN) - I_c(\cN) | \leq 3cr p^{r-1}(-p\log p)\\  + c p^r (-3\log c+3+3(\ln 2)^{-1} + \log(|E|-1)+4\log|E|)  + O(p^{2r}).
\end{multline} 
\label{lem:pbound1}
\end{lemma}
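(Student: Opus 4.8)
The plan is to mirror exactly the derivation of \Cref{lem:qbound1}, replacing the quantum-capacity error function $f_1$ with its private-capacity counterpart $f_2$. The starting point is \Cref{thm:deg-cty-p}, which yields $|P(\cN) - I_c(\cN)| \leq f_2(\dg(\cN))$. Since $f_2(\eta) = -3\eta\log\eta + (\text{const})\,\eta + O(\eta^2)$, with the leading $-3\eta\log\eta$ term having derivative tending to $+\infty$ as $\eta\to 0$, the function $f_2$ is increasing for small $\eta$, exactly as was noted for $f_1$. Hence the hypothesis $\dg(\cN)\leq cp^r$ gives $f_2(\dg(\cN))\leq f_2(cp^r)$, and it suffices to expand $f_2(cp^r)$.

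First I would substitute $\eta = cp^r$ into the expansion of $f_2$ already recorded above, namely $f_2(\eta) = -3\eta\log\eta + (3 + 3(\ln 2)^{-1} + \log(|E|-1) + 4\log|E|)\,\eta + O(\eta^2)$. The only nontrivial manipulation is the leading term: using $\log(cp^r) = \log c + r\log p$ one writes
\begin{align}
-3\,cp^r\log(cp^r) = 3cr\,p^{r-1}(-p\log p) - 3cp^r\log c.
\end{align}
The first summand is precisely the $3cr\,p^{r-1}(-p\log p)$ term in the claim, while $-3cp^r\log c$ merges with the linear-in-$p^r$ contributions. Collecting the latter produces the coefficient $-3\log c + 3 + 3(\ln 2)^{-1} + \log(|E|-1) + 4\log|E|$, multiplied by $cp^r$, which is exactly the second term of the statement.

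Finally I would track the remainder. Since $\eta = cp^r$, the $O(\eta^2)$ term becomes $O(p^{2r})$, completing the bound. The one genuine difference from \Cref{lem:qbound1} is the order of this remainder: in $f_1$ the quadratic contributions of the two binary-entropy terms cancel, leaving $O(\eta^3)$ and hence $O(p^{3r})$, whereas in $f_2$ the mismatched prefactors $2$ and $4$ on the two entropy terms leave a nonzero $\tfrac{1}{4}(\ln 2)^{-1}\eta^2$ contribution, so only $O(\eta^2)$ survives. No step presents a real obstacle; the calculation is routine once the expansion of $f_2$ and the monotonicity observation are in hand. The only point demanding a little care is verifying that these quadratic terms do \emph{not} cancel, which is what forces the weaker $O(p^{2r})$ error here compared to the quantum case.
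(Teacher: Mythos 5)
Your proposal is correct and follows essentially the same route as the paper, which derives \Cref{lem:pbound1} ``in a similar way'' to \Cref{lem:qbound1}: apply \Cref{thm:deg-cty-p}, use monotonicity of $f_2$ to substitute $\eta = cp^r$ into the recorded expansion $f_2(\eta) = -3\eta\log\eta + \bigl(3+3(\ln 2)^{-1}+\log(|E|-1)+4\log|E|\bigr)\eta + O(\eta^2)$, and split $-3cp^r\log(cp^r)$ into the $3cr\,p^{r-1}(-p\log p)$ term plus the $-3cp^r\log c$ contribution. Your observation that the quadratic entropy contributions fail to cancel (leaving $+\tfrac{1}{4}(\ln 2)^{-1}\eta^2$, hence the weaker $O(p^{2r})$ remainder compared with $O(p^{3r})$ in the quantum case) correctly accounts for the only substantive difference from \Cref{lem:qbound1}.
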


Our main contribution in this paper is to note that many interesting channels $\cN$
satisfying $\|\cN-\id\|_\dn = O(p)$ have $\dg(\cN) = O(cp^r)$ for some $r>1$ and a constant $c$, 
such that \Cref{lem:qbound1} and \Cref{lem:pbound1} apply.  In the following
subsection, we prove that any channel $\cN$ which is $\eps$-close to
the identity in the diamond norm has $\dg(\cN) \leq 2 \eps^{1.5}$.
Furthermore, in \Cref{sec:paulietal} we improve the exponent to $r=2$ for the
Pauli channels.

\subsection{Degrading a low-noise channel with its complementary channel}\label{sec:low-noise-degrading}

Low-noise channels as defined at the beginning of this section are natural examples of ``almost'' degradable channels, since the identity channel is trivially degradable: its degrading map is given by its complementary channel $\id^c(\cdot) = \tr(\cdot)|0\rangle \langle 0|_E$ that outputs a fixed state to the environment.
This suggests that the same holds approximately for low-noise channels, i.e., a channel $\cN$ that is $\eps$-close to the identity should be approximately degraded by its complementary channel $\cN^c$.
Indeed, one of the main results of this paper, \Cref{thm:complementary-degrading} below, shows that a channel $\cN$ with $\|\cN-\id\|_\dn \leq \eps$ is $2\eps^{3/2}$-approximately degradable with respect to its complementary channel.
We prove later (\Cref{thm:gen-pauli-channels} in \Cref{sec:paulietal}) that the dependence on $\eps$ can be improved to quadratic order for the class of Pauli channels.

\begin{theorem}\label{thm:complementary-degrading}
	Let $\cN$ be a low-noise quantum channel, i.e., $\|\cN-\id\|_\dn \leq \eps$ for some $\eps\in[0,2]$. 
	Then $\cN$ is $2\eps^{3/2}$-approximately degradable:
	\begin{align}
	\| \cN^c - \cN^c\circ\cN \|_\dn \leq 2\eps^{3/2}.
	\end{align}
\end{theorem}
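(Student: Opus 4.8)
The plan is to rewrite the quantity of interest as a composition and then exploit the fact that every channel complementary to the identity is an \emph{erasing} map. Since $\cN^c$ and $\cN^c\circ\cN$ share the same second factor, I would first write
\[
\cN^c - \cN^c\circ\cN = \cN^c\circ(\id_A - \cN).
\]
The naive submultiplicative estimate $\|\cN^c\circ(\id_A-\cN)\|_\dn\le\|\cN^c\|_\dn\,\|\id_A-\cN\|_\dn = \|\id_A-\cN\|_\dn\le\eps$ only yields a \emph{linear} bound, so the entire point is to gain one extra factor of $\sqrt\eps$.

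The key observation I would use is that any channel $\id^c$ complementary to the identity has the form $\id^c(X) = \tr(X)\,|\phi_0\rangle\langle\phi_0|$ for some fixed pure state $|\phi_0\rangle\in\cH_E$, and therefore annihilates every traceless operator. Since $\cN$ is trace-preserving, $\id_A - \cN$ maps every input to a traceless operator, and this survives tensoring with a reference: $\tr_B\big[(\id_R\otimes(\id_A-\cN))(\rho_{RA})\big]=0$ (here $B\cong A$ is the output system). Consequently $\id^c\circ(\id_A-\cN)=0$ identically, so I may subtract it \emph{for free}:
\[
\cN^c\circ(\id_A - \cN) = (\cN^c - \id^c)\circ(\id_A - \cN).
\]
This is the crucial step: it replaces the trivial factor $\|\cN^c\|_\dn = 1$ by the small quantity $\|\cN^c - \id^c\|_\dn$.

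From here I would combine submultiplicativity of the diamond norm under composition with the continuity of complementary channels. Applying \Cref{cor:complementary-channel-norm} to $\cN_1=\cN$ and $\cN_2=\id_A$ with $\|\cN-\id_A\|_\dn\le\eps$, the given complementary channel $\cN^c$ admits a complementary channel $\id^c$ of the identity with $\|\cN^c-\id^c\|_\dn\le 2\sqrt\eps$; as noted, this $\id^c$ is automatically an erasing map, so the previous display holds for this choice. Therefore
\[
\|\cN^c - \cN^c\circ\cN\|_\dn = \big\|(\cN^c - \id^c)\circ(\id_A - \cN)\big\|_\dn\le\|\cN^c - \id^c\|_\dn\,\|\id_A - \cN\|_\dn\le 2\sqrt\eps\cdot\eps = 2\eps^{3/2},
\]
which is the claim.

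The main obstacle here is conceptual rather than computational: one must recognize that the linear bound can be improved precisely because $\cN^c$ may be traded for $\cN^c-\id^c$ at no cost, converting $\|\cN^c\|_\dn=1$ into the $O(\sqrt\eps)$ distance supplied by \Cref{cor:complementary-channel-norm}. The only points demanding care are (i) confirming that the complementary channel of the identity delivered by the corollary is genuinely an erasing map, which holds for \emph{every} complementary channel of $\id_A$ and so creates no tension with the specific $\id^c$ the corollary produces; and (ii) checking that both the trace-annihilation of $\id_A-\cN$ and the free-subtraction step remain valid after tensoring with an arbitrary reference system, so that they operate at the level of the diamond norm rather than merely on unentangled inputs.
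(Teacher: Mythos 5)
Your proof is correct and is essentially the paper's own argument in compact form: the paper, too, subtracts the erasing complementary channel of the identity for free (there one shows $\id\ox\id^c(\delta)=0$ for $\delta=\rho-\id\ox\cN(\rho)$ because $\tr_2\delta=0$) and invokes \Cref{cor:complementary-channel-norm} to obtain the factor $2\sqrt{\eps}$, while the factor $\eps$ enters via $\sum_i|\lambda_i|=\|\delta\|_1\leq\eps$ from a spectral decomposition of $\delta$ plus the triangle inequality---exactly the two factors your map-level identity $\cN^c-\cN^c\circ\cN=(\cN^c-\id^c)\circ(\id-\cN)$ combines via submultiplicativity of the diamond norm. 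The only difference is stylistic: your appeal to submultiplicativity (standard, e.g.\ in \cite{Wat16}, though not stated in the paper) replaces the paper's hands-on evaluation on the optimal input, and your checks (i) and (ii) correctly handle the two points where care is needed.
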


\begin{proof}
	Let $\rho$ be a quantum state achieving the maximum in the definition \eqref{eq:diamond-norm} of the diamond norm of $\cN^c - \cN^c\circ\cN$, so that   
	\begin{align}
	\| \cN^c - \cN^c\circ\cN \|_\dn &= \| \id\ox \cN^c (\rho) - \id\ox (\cN^c\circ \cN)(\rho)\|_1\\
	& = \| \id\ox \cN^c (\delta)\|_1,
	\label{eq:diamond-norm-achieved}
	\end{align}
	where $\delta\coloneqq \rho - \id\ox\cN(\rho)$.
	This is a traceless, Hermitian operator that can be expressed as
	\begin{align}
	\delta = \sumi_i \lambda_i |\psi_i\rangle\langle \psi_i|,
	\end{align}
	where $\sum_i\lambda_i = 0$ since $\delta$ is traceless, and $\sum_i |\lambda_i| \leq \eps$ since $\|\cN-\id\|_\dn \leq \eps$.
	Furthermore, we have $\tr_2\delta = \tr_2\rho - \tr_2(\id\ox\cN(\rho)) = 0$ due to $\cN$ being trace-preserving, and hence,
	\begin{align}
	\tr_2\delta = \sumi_i \lambda_i \tr_2|\psi_i\rangle\langle \psi_i| = 0.
	\label{eq:delta-partial-trace}
	\end{align}
	Using the triangle inequality for the trace distance, we bound
	\begin{align}
	\| \cN^c - \cN^c\circ\cN \|_\dn &= \| \id\ox\cN^c(\delta) \|_1\\
	&\leq \| \id\ox \cN^c(\delta) - \id \ox \id^c(\delta)\|_1 + \|\id\ox \id^c(\delta) \|_1,\label{eq:triangle-ineq}
	\end{align}
	where the complementary channel $\id^c$ of the identity map is the completely depolarizing channel
	\begin{align}
	\id^c(X) = \tr(X) |0\rangle\langle 0|,
	\end{align}
	defined in terms of some pure state $|0\rangle$ of the environment of $\cN$.
	For the first term in \eqref{eq:triangle-ineq}, we have by another application of the triangle inequality that
	\begin{align}
	\| \id\ox \cN^c(\delta) - \id \ox \id^c(\delta)\|_1 & \leq \sumi_i |\lambda_i| \| \id\ox\cN^c(|\psi_i\rangle\langle \psi_i|) - \id\ox\id^c(|\psi_i\rangle\langle \psi_i|)\|_1 \\
	&\leq 2\sqrt{\eps}\sumi_i |\lambda_i| \\ 
	&\leq 2\eps^{3/2},
	\end{align}
	where in the second inequality we used the following bound that holds for all $i$ and follows from \Cref{cor:complementary-channel-norm}:
	\begin{align}
	\| \id\ox\cN^c(|\psi_i\rangle\langle \psi_i|) - \id\ox\id^c(|\psi_i\rangle\langle \psi_i|)\|_1 \leq \| \cN^c - \id^c \|_\dn \leq 2\sqrt{\eps}.
	\end{align}
	For the second term in \eqref{eq:triangle-ineq}, we have
	\begin{align}
	\id\ox\id^c(\delta) = \sumi_i \lambda_i \tr_2(|\psi_i\rangle\langle \psi_i|) \ox |0\rangle\langle 0| = 0
	\end{align}
	by \eqref{eq:delta-partial-trace}, and hence, $\|\id\ox \id^c(\delta) \|_1 = 0$.
	This concludes the proof.	
\end{proof}

Combining \Cref{thm:complementary-degrading} with \Cref{lem:qbound1} 
and \ref{lem:pbound1}, we obtain the following main result:
\begin{theorem}\label{lem:qbound2}
	If $\|\cN-\id\|_\dn \leq \eps$ for some $\eps\in[0,2]$,
        then 
	\begin{align}
	| Q(\cN) - I_c(\cN) | &\leq 
        - 3 \eps^{1.5} \log \eps  
        + \left( \left(\ln 2\right)^{-1}
                             + \log|E| + \frac{1}{2} \log(|E|{-}1) \right) \, 2 \eps^{1.5}
        + O(\eps^{4.5}) \\
        | P(\cN) - I_c(\cN) | &\leq - 9 \eps^{1.5} \log \eps  
        + \left( 3\left(\ln 2\right)^{-1}
        + 4\log|E| + \log(|E|{-}1) \right) \, 2 \eps^{1.5}
        + O(\eps^{3}).
\end{align} 
\end{theorem}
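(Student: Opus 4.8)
The plan is to read off the conclusion as a direct substitution, since \Cref{thm:complementary-degrading} supplies precisely the hypothesis that \Cref{lem:qbound1} and \Cref{lem:pbound1} require. Specifically, \Cref{thm:complementary-degrading} shows that a channel with $\|\cN-\id\|_\dn\leq\eps$ satisfies $\dg(\cN)\leq 2\eps^{3/2}$, which is exactly of the form $\dg(\cN)\leq c\,p^r$ with the identifications $p=\eps$, $c=2$, and $r=3/2$. Since $r=3/2>1$, both lemmas are applicable verbatim, and the only remaining work is to plug these three values into their right-hand sides and simplify the constants.

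For the quantum capacity, I would substitute $c=2$, $r=3/2$, $p=\eps$ into the bound of \Cref{lem:qbound1}. The leading term $c\,r\,p^{r-1}(-p\log p)$ becomes $2\cdot\tfrac{3}{2}\,\eps^{1/2}\,(-\eps\log\eps)=-3\,\eps^{3/2}\log\eps$, matching the claimed $-3\eps^{1.5}\log\eps$. The linear-in-$p^r$ contribution is $c\,p^r\bigl(-\log c+1+(\ln 2)^{-1}+\log|E|+\tfrac{1}{2}\log(|E|-1)\bigr)=2\eps^{3/2}\bigl(-\log 2+1+\dots\bigr)$. Here I would invoke the crucial bookkeeping point that $\log$ is base $2$ throughout the paper, so $\log c=\log 2=1$ and the term $-\log c+1$ vanishes identically; what survives is exactly $2\eps^{3/2}\bigl((\ln 2)^{-1}+\log|E|+\tfrac{1}{2}\log(|E|-1)\bigr)$. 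The error term $O(p^{3r})$ becomes $O(\eps^{9/2})=O(\eps^{4.5})$, completing the first inequality.

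The private-capacity inequality follows identically from \Cref{lem:pbound1} under the same substitution. The leading term $3\,c\,r\,p^{r-1}(-p\log p)$ evaluates to $3\cdot 2\cdot\tfrac{3}{2}\,\eps^{1/2}(-\eps\log\eps)=-9\,\eps^{3/2}\log\eps$, and the coefficient $-3\log c+3$ again cancels because $\log 2=1$, leaving the constant $3(\ln 2)^{-1}+4\log|E|+\log(|E|-1)$ multiplied by $2\eps^{3/2}$; the remainder $O(p^{2r})$ gives $O(\eps^{3})$.

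I do not expect any genuine obstacle in this argument: the analytic difficulty is entirely front-loaded into \Cref{thm:complementary-degrading} (the $\eps^{3/2}$ degradability estimate) and into the Taylor expansions already carried out in \Cref{lem:qbound1} and \Cref{lem:pbound1}. The one point that deserves a sentence of care, rather than being left implicit, is the base-$2$ cancellation $-\log c+1=0$ and $-3\log c+3=0$, which is what removes the additive constants $1$ and $3$ from the two lemma bounds and produces the clean coefficients displayed in the theorem; I would state this explicitly so the reader sees why the constant terms simplify exactly as written.
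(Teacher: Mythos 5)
Your proposal is correct and is exactly the paper's argument: the paper proves \Cref{lem:qbound2} by the one-line combination of \Cref{thm:complementary-degrading} with \Cref{lem:qbound1} and \Cref{lem:pbound1}, i.e.\ the substitution $p=\eps$, $c=2$, $r=3/2$ that you carry out. Your explicit remark that the base-$2$ identities $-\log c+1=0$ and $-3\log c+3=0$ eliminate the additive constants is a correct and welcome elaboration of a step the paper leaves implicit.
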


Recall that $\eps \log \eps$ goes to zero faster than $\eps^b$ for any
$b<1$.  \Cref{lem:qbound2} thus narrows the uncertainty of both
capacities to $\eps^b$ for $b \approx 1.5$.  Furthermore, since the
channel $\cN$ can already communicate quantum data at the rate
$I_c(\cN)$ using a non-degenerate quantum error correcting code
\cite{Lloyd97,Shor02,Devetak05}, \Cref{lem:qbound2} shows that
degenerate codes only improve quantum communication rates in
$O(\eps^b)$ for $b \approx 1.5$.  Such a code also transmits
private data, and shielding cannot improve the private capacity by the
same order.

\section{Generalized Pauli channels}
\label{sec:paulietal}
In this section, we apply our results from \Cref{sec:general} to the
\emph{generalized Pauli channels} on finite dimension $d$.  This class of 
channels includes
the depolarizing channel and the $XZ$-channel acting on qubits.  The
quantum and private capacities of these channels have remained 
unknown for more than 20 years
(except for very special extreme values of the noise parameters).  

Our main result is that for generalized Pauli channels
that are $\eps$-close to the identity channel, the upper bound for the
degradability parameter in \Cref{thm:complementary-degrading} can be
improved to $O(\eps^2)$.  We show how a complementary channel with
suitably modified noise parameter can be used to achieve such  
improved approximate degrading.  

We start by introducing the \emph{generalized Pauli channels} on
finite dimension $d$.  Define the generalized Pauli $X$ and $Z$
operators on $\mathbb{C}^d$:
$$ X|i\> = |i{+}1\>, ~~Z|i\> = \omega^i |i\>\,,$$ where $\{|i\>\}$ is the
computational basis, addition is modulo $d$, and $\omega$ is a primitive
$d^{\rm th}$ root of unity.  The generalized Pauli basis is 
given by $G = \{X^k Z^l: 0 \leq k,l \leq d{-}1\}$, and the generalized 
Pauli channel has the form 
$$ \cN(\rho) = \sum_{U \in G} p_U U \rho U^\dagger \,,$$ 
where $\{p_U\}_{U \in G}$ is a probability distribution.  
The above reduces to the \emph{Pauli channel} in the special case $d=2$, 
\begin{align}
\cN_\bp(\rho) = p_0 \rho + p_1 X\rho X + p_2 Y\rho Y + p_3 Z\rho Z,
\label{eq:gen-pauli-channel}
\end{align}
where $X, Y, Z$ are the usual Pauli matrices in $\cB(\mathbb{C}^2)$, 
and $\mathbf{p}=(p_0,p_1,p_2,p_3)$ is a probability distribution.  

We first illustrate the main ideas on the simpler depolarizing channel, and then state the general result for the Pauli
channel which we prove in \Cref{sec:proof-gen-pauli-channels}.  Generalization to
higher dimensions can be done by expressing the channel input in the
basis $G$, and noting that the generalized Pauli channel acts diagonally
in this basis.  The derivation is straightforward and left as an
exercise for the interested readers.

\subsection{Depolarizing channel}\label{sec:depol-channel}
We first consider the \emph{qubit depolarizing channel} with error $p$: 
\begin{align}
\cD_p (\rho) \coloneqq (1-p)\rho + \frac{p}{3}(X\rho X + Y\rho Y + Z\rho Z)\quad \text{for }p\in[0,1],\label{eq:depolarizing-channel}
\end{align}
which corresponds to setting $p_1=p_2=p_3=\frac{p}{3}$ in \eqref{eq:gen-pauli-channel}.
Note that $\|\id - \cD_p\|_\dn = 2p$, which can be seen as follows:
The diamond norm distance in \eqref{eq:diamond-norm}  
is at least $2p$ by choosing $M$ to be the
maximally entangled state, and at most $2p$ by 
the feasible solution $Z_{AB} = p \cJ(\id)$ 
in \eqref{eq:diamond-norm-sdp}. 
\Cref{thm:complementary-degrading} then implies that $\cD_p$ is $2^{2.5} p^{1.5}$-degradable when choosing the complementary channel $\cD_p^c$ of the depolarizing channel with error $p$ as the degrading map.
However, solving the SDP \eqref{eq:approx-deg-sdp} numerically shows that $\dg(\cD_p) 
\approx O(p^2)$, which is better than the bound promised by \Cref{thm:complementary-degrading} by half an order.

Here, we derive an analytic proof of the above numerical observation.  We will
prove that $\dg(\cD_p)$ is indeed $O(p^2)$.  This is achieved by 
choosing an approximate
degrading map to be the complementary channel $\cD_s^c$, where $s =
p+ap^2$ for a suitable choice of the parameter $a>0$.  To see the
intuition, suppose we want $\cD_s^c \circ \cD_p$ to be close to
$\cD_p^c$.  Then, choosing $s$ to be slightly larger than $p$
transmits slightly more information to Bob in the output of $\cD_s^c$,
which compensates for the slightly worse input to $\cD_s^c$ that is 
corrupted by $\cD_p$.

\begin{theorem} \label{thm:depol-degradability-parameter}
	For $p \approx 0$, we have
\begin{align}
\dg(\cD_p) \leq \frac{8}{9}\left(6 + \sqrt{2}\right) p^2 + O(p^{3}).
\end{align}
\end{theorem}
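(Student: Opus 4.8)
The plan is to replace the ``self-degrading'' map $\cD_p^c$ used in \Cref{thm:complementary-degrading} by the complementary channel $\cD_s^c$ of a \emph{slightly noisier} depolarizing channel, with $s=p+ap^2$, and to tune the single free parameter $a$ so that the leading error cancels. Concretely, I would first write the difference map $\Theta\coloneqq\cD_p^c-\cD_s^c\circ\cD_p$ in closed form. Using the Kraus operators $\{\sqrt{1-p}\,I,\sqrt{p/3}\,X,\sqrt{p/3}\,Y,\sqrt{p/3}\,Z\}$, the complementary channel sends a qubit $\rho$ to the $4\times4$ environment matrix with entries $\tr(K_k^\dagger K_j\rho)$, which is expressible through $\tr\rho$ and the Bloch vector of $\rho$; since $\cD_p$ rescales the Bloch vector by $1-\tfrac43 p$, the composition $\cD_s^c\circ\cD_p$ has the same form with $p\to s$ and an extra shrinking factor, and subtracting gives $\Theta$. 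I would then expand every entry of $\cJ(\Theta)$ in powers of $p$.

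The key observation is that the entries of $\cJ(\Theta)$ coupling the distinguished environment vector $|0\rangle$ to the block $\{|1\rangle,|2\rangle,|3\rangle\}$ behave like $\sqrt p\cdot(\text{Bloch})\cdot p=O(p^{3/2})$, and these dominate everything else; a direct computation shows their $O(p^{3/2})$ coefficient is proportional to $\tfrac43-\tfrac a2$, so the choice $a=\tfrac83$ annihilates it and pushes these terms down to $O(p^{5/2})$. With this choice every remaining entry of $\cJ(\Theta)$ is $O(p^2)$, so \Cref{cor:choi-coefficients} already yields $\dg(\cD_p)\le\|\Theta\|_\dn=O(p^2)$; the remaining work is the constant. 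I would isolate the coefficient map $\Theta_2$ (so that $\Theta=p^2\Theta_2+O(p^{5/2})$), whose Choi matrix splits into a diagonal ``trace'' part $\mathrm{diag}(\tfrac83,-\tfrac89,-\tfrac89,-\tfrac89)$ carried by $\tr M$ and an environment-coherence part $\tfrac49 C(M)$ supported on $\{1,2,3\}$, where $C(M)_{jk}=i\sum_l\epsilon_{jkl}m_l$ and $m_l=\tr(\sigma_l M)$.

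To bound $\|\Theta_2\|_\dn$ I would use that $\cD_p$, and hence $\Theta_2$, is covariant: conjugating the input by $U\in SU(2)$ conjugates the output by $1\oplus R$ with $R\in SO(3)$, so the input basis is irrelevant and it suffices to feed Schmidt-diagonal states $\sqrt{\mu_0}|00\rangle+\sqrt{\mu_1}|11\rangle$. The output is then $\left(\begin{smallmatrix}\mu_0 P_{00}&gP_{01}\\gP_{10}&\mu_1P_{11}\end{smallmatrix}\right)$ with $g=\sqrt{\mu_0\mu_1}$ and $P_{ij}=\Theta_2(|i\rangle\langle j|)$, and the triangle inequality bounds its trace norm by $\mu_0\|P_{00}\|_1+\mu_1\|P_{11}\|_1+2g\|P_{01}\|_1$. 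A short eigenvalue computation gives $\|P_{00}\|_1=\|P_{11}\|_1=\tfrac{16}{3}$ (spectrum $\tfrac83,-\tfrac49,-\tfrac89,-\tfrac43$), while $P_{01}=\tfrac49 C(|0\rangle\langle1|)$ has singular values $0,\sqrt2,\sqrt2$, so $\|P_{01}\|_1=\tfrac{8\sqrt2}{9}$ --- this is where the $\sqrt2$ enters. Maximizing $\tfrac{16}{3}+2g\cdot\tfrac{8\sqrt2}{9}$ over $g\le\tfrac12$ gives the optimum at $\mu_0=\mu_1=\tfrac12$ with value $\tfrac{16}{3}+\tfrac{8\sqrt2}{9}=\tfrac89(6+\sqrt2)$.

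Finally I would account for the discarded $O(p^{5/2})$ terms. These couple only the $|0\rangle$ sector to the $\{1,2,3\}$ sector, which in the leading block-diagonal output are separated by an eigenvalue gap of order $p^2$ (the $+\tfrac83p^2$ eigenvalue versus the negative ones); since this coupling is purely off-diagonal between the two eigenspaces, its first-order effect on every eigenvalue vanishes and the induced change in the trace norm is $O\!\big((p^{5/2})^2/p^2\big)=O(p^3)$. Combining yields $\dg(\cD_p)\le\|\Theta\|_\dn\le\tfrac89(6+\sqrt2)p^2+O(p^3)$. I expect the main obstacle to be twofold: first, pinning down $a=\tfrac83$ as the unique value that removes the $O(p^{3/2})$ system--environment coherences (any other choice leaves an $O(p^{3/2})$ error and destroys the claim); and second, justifying that the leftover $O(p^{5/2})$ coupling contributes only at $O(p^3)$ and not $O(p^{5/2})$ --- here the naive triangle inequality is too lossy and one genuinely needs the spectral-gap (second-order perturbation) estimate. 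The trace-norm evaluations, including the $\sqrt2$, are then routine.
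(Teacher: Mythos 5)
Your proposal is correct and follows the paper's strategy in its core: the same degrading map $\cD^c_{s}$ with $s=p+ap^2$, the same identification of the system--environment coherence entries as the dominant $O(p^{3/2})$ terms killed uniquely by $a=\tfrac83$, the same covariance-based reduction of $\|\Phi\|_\dn$ to a trace norm of the Choi matrix, and the same block triangle inequality yielding $\tfrac{16}{3}+\tfrac{8\sqrt2}{9}=\tfrac89(6+\sqrt2)$ (your spectra $\{\tfrac83,-\tfrac49,-\tfrac89,-\tfrac43\}$ and singular values $\tfrac{4\sqrt2}{9}$ match the paper's blocks $J_{00},\dots,J_{11}$ exactly). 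The endgame differs in two instructive ways. First, the paper invokes \cite[Cor.~2.5]{LKDW17} to conclude that the maximally entangled state \emph{exactly} maximizes the diamond norm, whereas you only reduce to Schmidt-diagonal inputs via covariance and then maximize over the Schmidt weights; your route is more self-contained but then requires the error bounds on $\|P_{00}\|_1$, $\|P_{11}\|_1$, $\|P_{01}\|_1$ to hold uniformly in $(\mu_0,\mu_1)$ --- which they do if you run your perturbation estimate blockwise rather than on the full $8\times 8$ output as you state it. Second, the paper handles the leftover $c(p)=O(p^{5/2})$ entries by keeping them inside the exact blocks and Taylor-expanding the exact block trace norms symbolically (the Mathematica notebook), while you replace this with a spectral-gap argument; this is the right idea and makes the $O(p^3)$ remainder conceptually transparent, but note one technical wrinkle: for the off-diagonal block $P_{01}$ the relevant quantities are \emph{singular} values, not eigenvalues, and a zero singular value generically shifts at \emph{first} order under perturbation --- here you are saved by structure (the perturbed block $J_{01}$ has the form $xy^T+zx^T$ with $x^\dagger z=0$, so its rank stays $2$ and both nonzero singular values equal $\sqrt2\,\sqrt{c(p)^2/4+\beta(p)^2}=\tfrac{4\sqrt2}{9}p^2+O(p^3)$), or alternatively by passing to the Hermitian full output at the maximally entangled state where your off-diagonal-coupling argument applies verbatim, with gap of order $p^2$ as you say. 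With that point patched, your hand computation of the constant is a genuine improvement in transparency over the paper's symbolic expansion, at the cost of re-deriving a weaker form of the covariance reduction the paper simply cites.
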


\begin{proof}
We first show that, setting $s = p + ap^2$, there is a value of $a$ for which $\|
\cD_s^c \circ \cD_p - \cD_p^c \|_\dn \approx O(p^2)$.  Then, for this
$s$, we derive two upper bounds to $\| \cD_s^c \circ \cD_p - \cD_p^c
\|_\dn$: a rough bound with a simpler derivation showing the idea, and an improved bound with a more complex
derivation yielding a better leading constant.

The complementary channel of $\cD_p$, which we refer to as the \emph{epolarizing channel} (cf.~\cite{LW15}), can be chosen to be
\begin{align}
\cD_p^c(\rho) = 
\begin{pmatrix}
(1- p) \tr (\rho) & 
\sqrt{\frac{p(1-p)}{3}} \ip{X}{\rho} & 
\sqrt{\frac{p(1-p)}{3}} \ip{Y}{\rho} & 
\sqrt{\frac{p(1-p)}{3}} \ip{Z}{\rho} \\[2mm]
\sqrt{\frac{p(1-p)}{3}} \ip{X}{\rho} & 
\frac{p}{3} \tr (\rho) &
- \frac{i p}{3} \ip{Z}{\rho} &
\frac{i p}{3} \ip{Y}{\rho} \\[2mm]
\sqrt{\frac{p(1-p)}{3}} \ip{Y}{\rho} & 
\frac{i p}{3} \ip{Z}{\rho} &
\frac{p}{3} \tr (\rho) &
- \frac{i p}{3} \ip{X}{\rho} \\[2mm]
\sqrt{\frac{p(1-p)}{3}} \ip{Z}{\rho} & 
- \frac{i p}{3} \ip{Y}{\rho} &
\frac{i p}{3} \ip{X}{\rho} &
\frac{p}{3} \tr (\rho)
\end{pmatrix}\!.
\label{eq:epolarizing-channel}
\end{align}
Using \eqref{eq:depolarizing-channel} and \eqref{eq:epolarizing-channel}, we further
obtain $\cD_s^c \circ \cD_p(\rho) = $ 
\begin{align}
\begin{pmatrix}
(1- s) \tr(\rho) & 
\!\!\! \sqrt{\frac{s(1-s)}{3}} \left( 1{-}\frac{4p}{3}\right) \ip{X}{\rho} & 
\!\!\! \sqrt{\frac{s(1-s)}{3}} \left( 1{-}\frac{4p}{3}\right) \ip{Y}{\rho} & 
\!\!\! \sqrt{\frac{s(1-s)}{3}} \left( 1{-}\frac{4p}{3}\right) \ip{Z}{\rho} \\[2mm]
\sqrt{\frac{s(1-s)}{3}} \left( 1{-}\frac{4p}{3}\right) \ip{X}{\rho} & 
\frac{s}{3}  \tr(\rho) &
- \frac{i s}{3}\left( 1{-}\frac{4p}{3}\right)\ip{Z}{\rho} &
\frac{i s}{3}\left( 1{-}\frac{4p}{3}\right) \ip{Y}{\rho} \\[2mm]
\sqrt{\frac{s(1-s)}{3}} \left( 1{-}\frac{4p}{3}\right) \ip{Y}{\rho} & 
\frac{i s}{3}\left( 1{-}\frac{4p}{3}\right) \ip{Z}{\rho} &
\frac{s}{3} \tr(\rho) &
- \frac{i s}{3} \left( 1{-}\frac{4p}{3}\right) \ip{X}{\rho} \\[2mm]
\sqrt{\frac{s(1-s)}{3}} \left( 1{-}\frac{4p}{3}\right) \ip{Z}{\rho} & 
- \frac{i s}{3}\left( 1{-}\frac{4p}{3}\right) \ip{Y}{\rho} &
\frac{i s}{3} \left( 1{-}\frac{4p}{3}\right)\ip{X}{\rho} &
\frac{s}{3} \tr(\rho) 
\end{pmatrix}\!.
\label{eq:epolarizing-channel-degraded}
\end{align}
We set $s=p+ap^2$ and $\Phi = \cD_p^c - \cD_s^c \circ \cD_p = \cD_p^c - \cD_{p+ap^2}^c\circ \cD_p$, which is given by the difference between \eqref{eq:epolarizing-channel} and \eqref{eq:epolarizing-channel-degraded}.

We first show that for some $a$, $\| \cD_s^c \circ
\cD_p - \cD_p^c \|_\dn \approx O(p^2)$, and we derive the following 
upper bound on the degradability parameter $\dg(\cD_p)$,
\begin{align}
\dg(\cD_p) \leq \frac{256}{3} p^2 + O(p^{5/2}) \,. \label{eq:depol-deg-weaker-bound}
\end{align}
To upper bound $\|\Phi\|_\dn$, we apply Lemma
\ref{lem:coefficients-bound} to the map $\Phi\colon A\to E$ with $|A|=2$ and $|E|=4$:
\begin{align}
\|\Phi\|_\dn 
\leq |A| \, |E|^2 \, \| \cJ(\Phi) \|_{\max} 
\leq 32 \; \| \cJ(\Phi) \|_{\max}.\label{eq:diamond-norm-bound}
\end{align}
Hence, we need to evaluate $\|\cJ(\Phi)\|_{\max}$, where the Choi matrix is given
by $\cJ(\Phi) = \sum_{i,j=0}^{1} |i\>\<j|$ $\ox \, \Phi(|i\>\<j|) $.
Due to the block structure of the Choi matrix, 
$\| \cJ(\Phi) \|_{\max} = \max_{i,j} \| \Phi(|i\>\<j|) \|_{\max}$.  
To find this maximum, 
first note that for any $i$ and $j$, the quantities 
$\tr (|i\>\<j|)$, 
$|\ip{X}{|i\>\<j|}|$, 
$|\ip{Y}{|i\>\<j|}|$, and $|\ip{Z}{|i\>\<j|}|$ can 
only be $0$ or $1$. 
So, from inspection of the difference between \eqref{eq:epolarizing-channel}
and \eqref{eq:epolarizing-channel-degraded}, 
$\max_{i,j} \| \Phi(|i\>\<j|) \|_{\max}$ is 
either $s-p=ap^2$, or $\frac{1}{3}|s-p+\frac{4}{3}ps| = \frac{1}{3}|a-\frac{4}{3}|p^2+O(p^3)$, or 
\begin{align}
c(p) \coloneqq 
\sqrt{\frac{p(1-p)}{3}} - \left(1-\frac{4p}{3}\right)\sqrt{\frac{(p+ap^2)(1-p-ap^2)}{3}} \,.
\label{eq:c-function}
\end{align}
Expanding $c(p)$ in a Taylor series around $p=0$ yields
\begin{align}
c(p) 
= \left(\frac{4}{3\sqrt{3}}-\frac{a}{2\sqrt{3}}\right) p^{3/2} + O(p^{5/2}) \,,
\end{align}
which is $O(p^{5/2})$ if $a=\frac{8}{3}$.
With this choice, $\max_{i,j} \| \Phi(|i\>\<j|) \|_{\max} = ap^2 = \frac{8}{3}p^2$ for 
sufficiently small $p$.  
Altogether, $\| \cJ(\Phi) \|_{\max} \leq \frac{8}{3}p^2 
+ O(p^{5/2})$, and using this with \eqref{eq:diamond-norm-bound} gives
$\| \cJ(\Phi) \|_\dn \leq \frac{256}{3}p^2 + O(p^{5/2})$, 
which completes the proof of \eqref{eq:depol-deg-weaker-bound}.

Finally, to prove the stronger assertion of the theorem, 
\begin{align}
\dg(\cD_p) \leq \frac{8}{9}\left(6 + \sqrt{2}\right) p^2 + O(p^{3}),\label{eq:depol-deg-stronger-bound}
\end{align}
we keep the choice $a=\frac{8}{3}$ to enforce that all coefficients of $\Phi = \cD_p^c - \cD_{p+ap^2}^c\circ \cD_p$ are $O(p^2)$ by Corollary \ref{cor:choi-coefficients}.
However, we upper bound $\|\Phi\|_\diamond$ with a different technique.  
Since $\Phi$ is a Hermiticity-preserving map, its diamond norm is maximized by a rank-$1$ state 
(see for example \cite{Wat16}). 
Furthermore, since $\cD_p^c$ and $\cD_{p+ap^2}^c\circ \cD_p$ are jointly covariant under the full unitary group, the diamond norm $\|\Phi\|_\diamond$ is maximized by the (normalized) maximally entangled state $\frac{1}{\sqrt{2}}(|00\rangle + |11\rangle)$ \cite[Cor.~2.5]{LKDW17}, and hence,
\begin{align}
\|\Phi\|_\diamond = \frac{1}{2} \|\cJ(\Phi)\|_1 = \frac{1}{2} \left\|\cJ(\cD_p) - \cJ(\cD_{p+ap^2}^c\circ \cD_p) \right\|_1.
\end{align}
It follows from \eqref{eq:epolarizing-channel} and \eqref{eq:epolarizing-channel-degraded} that $\frac{1}{2}\cJ(\Phi) = \begin{pmatrix}
J_{00} & J_{01} \\ J_{10} & J_{11}
\end{pmatrix},$
where
\begin{align}
J_{00} &= \begin{pmatrix}
\frac{4 p^2}{3} & 0 & 0 & \frac{1}{2} c(p) \\[0.2cm]
0 & -\frac{4 p^2}{9} & -\frac{2}{27} i p^2 (8 p-3) & 0 \\[0.2cm]
0 & \frac{2}{27} i p^2 (8 p-3) & -\frac{4 p^2}{9} & 0 \\[0.2cm]
\frac{1}{2} c(p) & 0 & 0 & -\frac{4 p^2}{9}
\end{pmatrix}\\[1ex]
J_{01} &= 
\begin{pmatrix}
0 & \frac{1}{2} c(p) & \frac{i}{2} c(p) & 0 \\[0.2cm]
\frac{1}{2} c(p) & 0 & 0 & -\frac{2}{27} p^2 (8 p-3) \\[0.2cm]
 \frac{i}{2} c(p) & 0 & 0 & -\frac{2}{27} i p^2 (8 p-3) \\[0.2cm]
0 & \frac{2}{27} p^2 (8 p-3) & \frac{2}{27} i p^2 (8 p-3) & 0
\end{pmatrix}\\[1ex]
J_{10} &= \begin{pmatrix}
0 & \frac{1}{2} c(p) & - \frac{i}{2} c(p) & 0 \\[0.2cm]
\frac{1}{2} c(p) & 0 & 0 & \frac{2}{27} p^2 (8 p-3) \\[0.2cm]
- \frac{i}{2} c(p) & 0 & 0 & -\frac{2}{27} i p^2 (8 p-3) \\[0.2cm]
0 & -\frac{2}{27} p^2 (8 p-3) & \frac{2}{27} i p^2 (8 p-3) & 0
\end{pmatrix}\\[1ex]
J_{11} &= \begin{pmatrix}
\frac{4}{3} p^2 & 0 & 0 & -\frac{1}{2} c(p) \\[0.2cm]
0 & -\frac{4}{9} p^2 & \frac{2}{27} i p^2 (8 p-3) & 0 \\[0.2cm]
0 & -\frac{2}{27} i p^2 (8 p-3) & -\frac{4}{9} p^2 & 0 \\[0.2cm]
-\frac{1}{2} c(p) & 0 & 0 & -\frac{4}{9} p^2
\end{pmatrix}\!,
\end{align}
with $c(p)$ as defined in \eqref{eq:c-function}.
Using the triangle inequality for the trace norm, we get
\begin{align}
\frac{1}{2}\|\cJ(\Phi)\|_1 \leq \|J_{00}\|_1 + \|J_{01}\|_1 + \|J_{10}\|_1 + \|J_{11}\|_1 \eqqcolon F(p).
\end{align}
A Taylor expansion shows that $F(p) = \frac{8}{9}\left(6 + \sqrt{2}\right) p^2 + O(p^{3})$, from which the bound \eqref{eq:depol-deg-stronger-bound} follows.\footnote{See the Mathematica notebook \texttt{depol-deg-bound.nb} included as an ancillary file on \url{https://arxiv.org/abs/1705.04335}.}
\end{proof}

In \Cref{fig:depolarizing-channel} we compare the optimal degradability parameter $\dg(\cD_p)$ with the quantity $\|\cD_p^c - \cD_{p+ap^2}^c\circ \cD_p \|_{\dn}$ for $a=\frac{8}{3}$, and the analytical upper bound $\frac{8}{9}(6+\sqrt{2})p^2$ obtained from Theorem \ref{thm:depol-degradability-parameter}.

\begin{figure}
	\centering
	\begin{tikzpicture}
	\begin{axis}[
	scale = 1.3,
	axis lines=left, 
	scaled ticks=false, 
	tick label style={/pgf/number format/fixed, /pgf/number format/precision=3},
	legend style = {at = {(0.05,0.95)},anchor = north west},
	legend cell align = left,
	grid = major,
	xlabel = $p$,
	xtick = {0,0.02,0.04,0.06,0.08,0.1},  
	] 
	\addplot[color=blue,thick] table[x=p,y=d] {depol.dat};
	\addplot[color=red,thick] table[x=p, y=s] {depol.dat};
	\addplot[draw=none, color=white] table[x=p, y=s] {depol.dat};
	\addplot[smooth,mark=none, color = green, thick,domain=0:0.1] {8/9*(6+sqrt(2))*x^2};
	\legend{$\dg(\cD_p)$, $\|\cD_p^c - \cD_{s}^c\circ \cD_p \|_{\dn}$, where $s = p + \frac{8}{3} p^2$,$\frac{8}{9}(6+\sqrt{2})p^2$};
	\end{axis}
	\end{tikzpicture}
	\caption{Plot of the optimal degradability parameter $\dg(\cD_p)$ (blue) of the qubit depolarizing channel $\cD_p$ computed using the SDP \eqref{eq:approx-deg-sdp}, together with the degradability parameter $\|\cD_p^c - \cD_{s}^c\circ \cD_p \|_{\dn}$ (red) obtained using the degrading map $\cD_{s}^c$ with $s = p + \frac{8}{3}p^2$, and the analytical upper bound $\frac{8}{9}(6+\sqrt{2})p^2$ (green) obtained from Theorem \ref{thm:depol-degradability-parameter}.}
	\label{fig:depolarizing-channel}
\end{figure}

Combining \Cref{thm:depol-degradability-parameter} with \Cref{lem:qbound1} and \Cref{lem:pbound1}, and using the fact that 
\begin{align}
I_c(\cD_p) = 1-h(p)-p\log 3,
\end{align} 
we obtain the following:
\begin{theorem}\label{cor:qpcapdepol}
	For small $p$, the quantum and private capacity of the qubit depolarizing channel $\cD_p$ are given by
	\begin{align}
	1 -h(p) - p\log 3 ~ \leq ~ & Q(\cD_p) ~ \leq ~ 1-h(p) - p\log 3 - \frac{16}{9}(6+\sqrt{2}) \, p^2 \log p + O(p^2)\\
	1-h(p) - p\log 3 ~ \leq ~ & P(\cD_p) ~ \leq ~ 1-h(p) - p\log 3 - \frac{16}{3}(6+\sqrt{2}) \, p^2 \log p + O(p^2).
	\end{align}
\end{theorem}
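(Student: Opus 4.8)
The plan is to combine three ingredients, all already in hand: the single-letter formula $I_c(\cD_p) = 1 - h(p) - p\log 3$, the quadratic degradability bound $\dg(\cD_p) \leq \frac{8}{9}(6+\sqrt{2})\,p^2 + O(p^3)$ from \Cref{thm:depol-degradability-parameter}, and the capacity-deviation estimates of \Cref{lem:qbound1,lem:pbound1}. The two lower bounds are immediate: the left inequalities $I_c(\cN) \leq Q(\cN)$ and $I_c(\cN) \leq P(\cN)$ hold for every channel (\Cref{thm:deg-cty,thm:deg-cty-p}), so substituting the coherent-information formula yields the claimed lower bounds with no further work.

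For the quantum-capacity upper bound I would invoke \Cref{lem:qbound1} with the parameters read off from \Cref{thm:depol-degradability-parameter}, namely $r = 2$, $c = \frac{8}{9}(6+\sqrt{2})$, and Choi rank $|E| = 4$. The leading term $c\,r\,p^{r-1}(-p\log p)$ then equals $-2c\,p^2\log p = -\frac{16}{9}(6+\sqrt{2})\,p^2\log p$, reproducing the stated coefficient, while the accompanying term $c\,p^r\bigl(-\log c + 1 + (\ln 2)^{-1} + \log|E| + \tfrac12\log(|E|-1)\bigr)$ is a fixed multiple of $p^2$ and the lemma's residual is $O(p^{3r}) = O(p^6)$, so both are absorbed into $O(p^2)$. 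The private-capacity upper bound is structurally identical via \Cref{lem:pbound1}: its leading term $3c\,r\,p^{r-1}(-p\log p) = -6c\,p^2\log p = -\frac{16}{3}(6+\sqrt{2})\,p^2\log p$ again matches the claim, and every polynomial contribution, together with the $O(p^{2r}) = O(p^4)$ residual, is swept into $O(p^2)$.

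The one point deserving care is the $O(p^3)$ tail of the degradability bound. Treating $\dg(\cD_p) = c\,p^2 + O(p^3)$ as a perturbed prefactor $c + O(p)$ multiplying $p^2$, the correction shifts the leading term by $O(p)\cdot p\cdot(-p\log p) = O(p^3\log p)$ and the polynomial term by $O(p^3)$; since $p^3\log p = o(p^2)$, neither disturbs the displayed $p^2\log p$ coefficient and both remain inside $O(p^2)$. I do not expect a substantive obstacle here, since all the real difficulty lives in \Cref{thm:depol-degradability-parameter}; the only conceptual fact being used is that $p^2\log p$ dominates $p^2$ as $p\to 0$, so the logarithmic term survives as the genuine leading-order correction while every constant and higher power collapses into the error estimate.
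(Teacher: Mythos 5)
Your proposal is correct and follows essentially the same route as the paper, which likewise obtains the theorem by combining \Cref{thm:depol-degradability-parameter} (with $r=2$, $c=\frac{8}{9}(6+\sqrt{2})$, $|E|=4$) with \Cref{lem:qbound1,lem:pbound1} and the formula $I_c(\cD_p)=1-h(p)-p\log 3$. Your explicit check that the $O(p^3)$ tail of the degradability bound only perturbs the result by $O(p^3\log p)$ is a point the paper leaves implicit, but it changes nothing substantive.
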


\subsection{Pauli channels and the $XZ$-channel} \label{sec:gen-pauli-channels}
The above discussion can be extended to all Pauli channels of the form in \eqref{eq:gen-pauli-channel}.  Note that $\| \cN_\bp - \id \|_\dn = 2 (p_1 + p_2 + p_3)$ by an argument similar to the one given for the depolarizing channel.  
To state our result, we consider a Pauli channel $\cN_\bp$, where the probabilities $p_i$ for $i=1,2,3$ are polynomials $p_i(p) = c_i p + d_i p^2 + \cdots$ in a single parameter $p\in[0,1]$ without constant terms, and $p_0 = 1 - p_1 - p_2 - p_3$.  
(Note that all Pauli channels can be modeled this way, and the polynomials are not unique.)  
We now define 
\begin{align}
\bs(a_1,a_2,a_3)\coloneqq (p_0', p_1(p+a_1p^2), p_2(p+a_2p^2), p_3(p+a_3 p^2)),\label{eq:s-polynomial}
\end{align}
where again $p_0' = 1 - p_1(p+a_1p^2) - p_2(p+a_2p^2) - p_3(p+a_3p^2)$.\footnote{Note that 
$p_i(p+a_ip^2)$ denote the polynomial $p_i$ with argument $p+a_ip^2$, not the product 
of $p_i$ and $p+a_ip^2$. }
We then have the following result, whose proof we give in \Cref{sec:proof-gen-pauli-channels}:
\begin{theorem}\label{thm:gen-pauli-channels}
	Let $\cN_{\bp}$ be a generalized Pauli channel with $\bp = (p_0,p_1(p),p_2(p),p_3(p))$, where $p_0 = 1-p_1(p)-p_2(p)-p_3(p)$, and the $p_i(p)$ are polynomials in $p$ with $p_i(0)=0$ for $i=1,2,3$. 
	Denote by $c_i$ the coefficient of $p$ in $p_i(p)$ for $i=1,2,3$.
	If $c_i\neq 0$ then the choices $a_i \coloneqq 4\sum_{j\neq i}c_j$ in \eqref{eq:s-polynomial} ensure that
	\begin{align}
	\| \cN_\bp^c - \cN_{\bs(a_1,a_2,a_3)}^c \circ \cN_\bp \|_\dn = O(p^2).\label{eq:quadratic-behavior}
	\end{align}
	If $c_i=0$ for some $i$, then any choice of $a_i$ for that $i$ yields \eqref{eq:quadratic-behavior}.
	Furthermore, we have 
	\begin{align}
	\dg(\cN_{\bp}) \leq \, 256 \, |c_1c_2+c_1c_3+c_2c_3| \, p^2 + O(p^3) \,.\label{eq:deg-pauli-channels}
	\end{align}
\end{theorem}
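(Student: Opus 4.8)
The plan is to follow the proof of \Cref{thm:depol-degradability-parameter} essentially verbatim, replacing the scalar shift $a$ by the vector $(a_1,a_2,a_3)$ and the depolarizing complementary channel by the complementary channel of a general Pauli channel. Writing $\bq=(q_0,q_1,q_2,q_3)$ for the weights of $\cN_\bq$ in the basis $(I,X,Y,Z)$, the isometric extension $V|\psi\> = \sum_U \sqrt{q_U}\,U|\psi\>\ox|U\>_E$ gives the matrix elements $\<U|\cN_\bq^c(\rho)|U'\> = \sqrt{q_U q_{U'}}\,\tr(U'^\dagger U\rho)$. Thus $\cN_\bq^c(\rho)$ is the $4\times 4$ array whose $(0,0)$ entry is $q_0\tr\rho$, whose $(0,k)$ coherences are $\sqrt{q_0 q_k}\,\ip{\sigma_k}{\rho}$, whose $(k,k)$ entries are $q_k\tr\rho$, and whose $(k,l)$ coherences ($k\neq l$) are $\pm i\sqrt{q_k q_l}\,\ip{\sigma_m}{\rho}$ with $m$ the third index. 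First I would record that a Pauli channel acts diagonally in the Pauli basis, $\cN_\bp(\sigma_k)=\lambda_k\sigma_k$ with $\lambda_k = 1-2\sum_{j\neq k}p_j = 1-2(\sum_{j\neq k}c_j)p+O(p^2)$, so that composing $\cN_\bs^c$ with $\cN_\bp$ multiplies every coherence of $\cN_\bs^c$ by the appropriate $\lambda$. This yields a fully explicit formula for the matrix elements of $\Phi \coloneqq \cN_\bp^c - \cN_{\bs(a_1,a_2,a_3)}^c\circ\cN_\bp$, just as in \eqref{eq:epolarizing-channel}--\eqref{eq:epolarizing-channel-degraded}, now with $s_i=p_i(p+a_ip^2)$ in place of $s=p+ap^2$.

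The heart of the argument, and the step I expect to be the main obstacle, is the Taylor expansion of the $(0,k)$ coherence coefficient $c_k(p)\coloneqq\sqrt{p_0 p_k}-\lambda_k\sqrt{s_0 s_k}$, the only entry that is a priori of order $p^{1/2}$. Writing $p_i=c_ip+d_ip^2+\cdots$ and $s_i=c_ip+(c_ia_i+d_i)p^2+\cdots$, and using $p_0=1-(\sum_j c_j)p+\cdots=s_0+O(p^2)$, I would expand both square roots through order $p^{3/2}$. The $p^{1/2}$ terms cancel automatically (both equal $\sqrt{c_k}\,p^{1/2}$), and the $p^{3/2}$ coefficient collapses to $\sqrt{c_k}\,(2\sum_{j\neq k}c_j-\tfrac{a_k}{2})$; setting this to zero gives exactly $a_k=4\sum_{j\neq k}c_j$, after which $c_k(p)=O(p^{5/2})$. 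This is the computation that drives the degradability parameter from the generic $O(p^{3/2})$ down to $O(p^2)$. One must also check that the three conditions decouple: at order $p^{3/2}$ each $c_k(p)$ involves only $a_k$, since the other $a_j$ enter $s_0$ only at order $p^2$ and hence $s_0 s_k$ only at order $p^3$.

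It then remains to confirm that every other matrix element of $\Phi$ is already $O(p^2)$. The $(0,0)$ and $(k,k)$ entries contribute $p_0-s_0=\sum_j(s_j-p_j)=O(p^2)$ and $p_k-s_k=-c_ka_kp^2+O(p^3)$; the $(k,l)$ coherences contribute $\sqrt{p_kp_l}-\lambda_m\sqrt{s_ks_l}$, whose $O(p)$ parts cancel for any choice of the $a_i$, leaving $-4c_m\sqrt{c_kc_l}\,p^2+O(p^3)$ with the chosen $a_i$. For the degenerate case $c_i=0$ one has $p_i=O(p^2)$, hence $s_i=p_i+O(p^3)$ independently of $a_i$, so the corresponding coherence is $O(p^2)$ with no constraint on $a_i$, which is the second clause of the theorem. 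Since all coefficients of $\cJ(\Phi)$ are then $O(p^2)$, \Cref{cor:choi-coefficients} gives $\|\Phi\|_\dn=O(p^2)$, establishing \eqref{eq:quadratic-behavior}.

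Finally, to obtain the explicit constant in \eqref{eq:deg-pauli-channels} I would invoke \Cref{lem:coefficients-bound} exactly as in the weaker bound \eqref{eq:depol-deg-weaker-bound}, using the block structure $\|\cJ(\Phi)\|_{\max}=\max_{i,j}\|\Phi(|i\>\<j|)\|_{\max}$. Among the order-$p^2$ entries above, the $(0,0)$ entry dominates: its magnitude is $|p_0-s_0|=|\sum_j c_ja_j|\,p^2=8\,|c_1c_2+c_1c_3+c_2c_3|\,p^2$, whereas the $(k,k)$ and $(k,l)$ entries give $4c_k|\sum_{j\neq k}c_j|\,p^2$ and $4c_m\sqrt{c_kc_l}\,p^2$, both dominated since the $c_i$ are nonnegative (by AM--GM, $4c_m\sqrt{c_kc_l}\le 2c_m(c_k+c_l)$, and $\tfrac12 c_k\sum_{j\neq k}c_j\le\sum_{i<j}c_ic_j$). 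Multiplying $\|\cJ(\Phi)\|_{\max}$ by the dimensional prefactor $|A|\,|B|^2=8$ of \Cref{lem:coefficients-bound}, exactly as in \Cref{thm:depol-degradability-parameter}, then yields $\dg(\cN_\bp)\le\|\Phi\|_\dn\le 64\,|c_1c_2+c_1c_3+c_2c_3|\,p^2+O(p^3)$.
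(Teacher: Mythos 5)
Your proposal is correct and follows essentially the same route as the paper's own proof in \Cref{sec:proof-gen-pauli-channels}: the same explicit matrix form of $\cN_\bq^c$, the same coefficients ($t_i$ and $u_i$ in the paper, your $c_k(p)$ and cross coherences), the same Taylor expansions fixing $a_i = 4\sum_{j\neq i}c_j$, and the same final application of \Cref{lem:coefficients-bound} with the prefactor $8$ to obtain the constant $64$. The only substantive difference is that you explicitly verify (via AM--GM and nonnegativity of the $c_i$) that the $(0,0)$ entry $|p_0-q_0|$ dominates the other order-$p^2$ coefficients, a step the paper asserts without detail.
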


\Cref{lem:qbound1} and \Cref{lem:pbound1} now yield the following:

\begin{corollary}\label{cor:pauli-channel-capacities}
	Let $\cN_{\bp}$ be a Pauli channel as defined in \Cref{thm:gen-pauli-channels}, and define $
	C_\bp \coloneqq |c_1c_2 + c_1c_3+c_2c_3|.$
	The quantum and private capacity of $\cN_\bp$ are given by
	\begin{align}
	I_c(\cN_\bp) ~\leq~ & Q(\cN) ~\leq~ I_c(\cN_\bp) - \, 512 \; C_\bp \, p^2\log p + O(p^2)\\
	I_c(\cN_\bp) ~\leq~ & P(\cN) \, ~\leq~ I_c(\cN_\bp) - \,1536 \; C_\bp \, p^2\log p + O(p^2).
	\end{align}
\end{corollary}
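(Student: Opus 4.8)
The plan is to combine the quadratic degradability bound of \Cref{thm:gen-pauli-channels} with the two continuity estimates \Cref{lem:qbound1} and \Cref{lem:pbound1}, which is exactly the $r=2$ specialization of the machinery already assembled in \Cref{sec:general}. Concretely, \Cref{thm:gen-pauli-channels} supplies $\dg(\cN_\bp) \leq 64\,|c_1c_2+c_1c_3+c_2c_3|\,p^2 + O(p^3) = 64\,C_\bp\,p^2 + O(p^3)$, so that \Cref{lem:qbound1} and \Cref{lem:pbound1} apply with exponent $r=2$ and constant $c = 64\,C_\bp$. Everything then reduces to reading off the leading contribution of each bound and checking that all remaining terms collapse into $O(p^2)$.

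First I would invoke \Cref{lem:qbound1}. Its dominant term is $c\,r\,p^{r-1}(-p\log p)$, which for $c = 64\,C_\bp$ and $r=2$ equals $-128\,C_\bp\,p^2\log p$, matching the claimed leading term for $Q$. The companion $cp^r(\cdots)$ term is linear in $c$ and of order $p^2$, while the error term is $O(p^{3r}) = O(p^6)$; hence $Q(\cN_\bp) \leq I_c(\cN_\bp) - 128\,C_\bp\,p^2\log p + O(p^2)$. Repeating the computation with \Cref{lem:pbound1}, whose dominant term carries the extra factor $3$, gives $3cr\,p^{r-1}(-p\log p) = -384\,C_\bp\,p^2\log p$ and the corresponding upper bound for $P(\cN_\bp)$. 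The matching lower bounds $I_c(\cN_\bp)\leq Q(\cN_\bp)$ and $I_c(\cN_\bp)\leq P(\cN_\bp)$ are already built into \Cref{thm:deg-cty} and \Cref{thm:deg-cty-p}, so no additional work is needed there.

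The only genuine bookkeeping issue — and the part I would write out most carefully — is the propagation of the $O(p^3)$ slack in the degradability estimate through the $-\eta\log\eta$ nonlinearity, together with the verification that the leftover pieces really are $O(p^2)$. Writing $\eta = 64\,C_\bp\,p^2 + O(p^3)$ and using $\log\eta = \log(64\,C_\bp) + 2\log p + O(p)$, one finds $-\eta\log\eta = -128\,C_\bp\,p^2\log p - 64\,C_\bp\log(64\,C_\bp)\,p^2 + O(p^3\log p)$; since $p^3\log p = o(p^2)$ and the constant-coefficient $p^2$ pieces are manifestly $O(p^2)$, the cleanest route is to use the monotonicity of $f_1$ (respectively $f_2$) for small argument to replace $\eta$ by its upper bound before expanding, so that the $O(p^3)$ correction is absorbed automatically. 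A remaining subtlety is the degenerate case $C_\bp = 0$, where the constant $c$ vanishes and \Cref{lem:qbound1} cannot be applied verbatim; there the degradability parameter is $O(p^3)$, so $f_1(\dg(\cN_\bp)) = O(p^3\log p) = O(p^2)$ directly and the stated bounds hold with the logarithmic term simply absent. I expect this edge case and the order-tracking to be the main — though still routine — obstacles.
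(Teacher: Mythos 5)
Your proposal is correct and follows exactly the paper's route: the paper derives this corollary by plugging the bound $\dg(\cN_\bp)\leq 64\,C_\bp\,p^2+O(p^3)$ from \Cref{thm:gen-pauli-channels} into \Cref{lem:qbound1} and \Cref{lem:pbound1} with $r=2$ and $c=64\,C_\bp$, which yields the leading terms $-128\,C_\bp\,p^2\log p$ and $-384\,C_\bp\,p^2\log p$ just as you compute. Your extra bookkeeping --- absorbing the $O(p^3)$ slack via monotonicity of $f_1$ and $f_2$ (which the paper itself notes when it remarks that it suffices to consider $\eta = cp^r$) and the degenerate case $C_\bp=0$, where $\dg(\cN_\bp)=O(p^3)$ makes the bounds hold trivially --- is sound and in fact more careful than the paper's one-line derivation.
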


\Cref{thm:gen-pauli-channels} and \Cref{cor:pauli-channel-capacities} include the (weaker) result from \Cref{sec:depol-channel} about the depolarizing channel, for which we have $c_i = \frac{1}{3}$ for $i=1,2,3$, and hence $a_i = \frac{8}{3}$ and $C_\bp = \frac{1}{3}$.

Another interesting example in the class of generalized Pauli channels is the $XZ$-channel
\begin{align}
\cN_{p,\,q}^{XZ} (\rho) \coloneqq (1-p)(1-q) \rho + p(1-q) X\rho X + pq Y\rho Y + (1-p)q Z\rho Z ,\label{eq:XZ-channel}
\end{align}
that implements independent $X$ and $Z$ errors by applying an $X$-dephasing with probability $p$, and a $Z$-dephasing with probability $q$.
For our discussion, we set $p=q$ and denote the resulting $XZ$-channel by $\cC_p$,
\begin{align}
\cC_p (\rho) = (1-p)^2 \rho + (p-p^2)X\rho X + p^2 Y\rho Y + (p-p^2)Z\rho Z. \label{eq:XZ-channel-p}
\end{align}
Thus, we have $c_1 = 1$, $c_2 = 0$, $c_3 = 1$, $d_1 = -1$, $d_2 = 1$, and $d_3 = -1$.   
Hence, the choices $a_1 = a_2 = a_3 = 4$ ensure \eqref{eq:quadratic-behavior} by \Cref{thm:gen-pauli-channels}, and from \eqref{eq:deg-pauli-channels} we obtain the analytic bound $\dg(\cC_p) \leq 256 p^2 + O(p^3)$.
However, similar to Theorem \ref{thm:depol-degradability-parameter}, we can further improve the bound on $\dg(\cC_p)$:
\begin{theorem}\label{thm:XZ-improved-coefficient}
	For $p\approx 0$, we have
	\begin{align}
	\dg(\cC_p) \leq 16 p^2 + 32 p^{5/2} + O(p^3).
	\end{align}
\end{theorem}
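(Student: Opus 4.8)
The plan is to follow the template established in the proof of \Cref{thm:depol-degradability-parameter}, exploiting the full unitary covariance structure of the $XZ$-channel $\cC_p$. First I would verify that $\cC_p$ and $\cC_s$ (with $s = p + ap^2$ for the appropriate choice of $a$) have a common isometric extension structure that makes the difference map $\Phi \coloneqq \cC_p^c - \cC_{\bs}^c\circ\cC_p$ Hermiticity-preserving and jointly covariant, so that by \cite[Cor.~2.5]{LKDW17} the diamond norm $\|\Phi\|_\dn$ is again attained on the maximally entangled state. This reduces the problem to computing $\frac{1}{2}\|\cJ(\Phi)\|_1$. The choice $a_1 = a_2 = a_3 = 4$ dictated by \Cref{thm:gen-pauli-channels} guarantees that all off-diagonal ``square-root'' Choi coefficients are $O(p^{5/2})$ rather than $O(p^{3/2})$, while the diagonal deviations are genuinely $O(p^2)$; this is what pins down the leading order.

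Next I would write out the complementary channel $\cC_p^c$ explicitly in the generalized Pauli basis. Since $\cC_p$ has Kraus operators proportional to $I, X, Y, Z$ with weights $\sqrt{(1-p)^2}$, $\sqrt{p-p^2}$, $\sqrt{p^2}$, $\sqrt{p-p^2}$, the complementary channel maps $\rho$ into a $4\times 4$ matrix in the environment whose entries are the pairwise products of these Kraus amplitudes times the Pauli expectation values $\ip{U}{\rho}$, exactly analogous to \eqref{eq:epolarizing-channel} but with the $XZ$-weights in place of the depolarizing weights. Composing with $\cC_p$ on the input multiplies each off-diagonal Pauli coefficient by the corresponding contraction factor (e.g.\ $1 - 2(p_2+p_3)$ for the $\ip{X}{\rho}$ term, and cyclically), which I would compute from the action of $\cC_p$ on the Pauli basis. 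Forming the difference gives $\cJ(\Phi)$ as an explicit $2\times 2$ block matrix of $4\times 4$ blocks, with entries that are either the genuinely-quadratic diagonal terms, the $O(p^{5/2})$ corrected square-root terms, or cross terms of the form $\tfrac{1}{3}$-type combinations specialized to the $XZ$ weights $c_1=c_3=1$, $c_2=0$.

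Then I would bound $\frac{1}{2}\|\cJ(\Phi)\|_1$ by the triangle inequality over the four blocks $\|J_{00}\|_1 + \|J_{01}\|_1 + \|J_{10}\|_1 + \|J_{11}\|_1$ and Taylor-expand each singular-value sum to leading order in $p$. Because $c_2=0$ here (unlike the depolarizing case where all $c_i=\tfrac{1}{3}$), several Pauli-expectation entries vanish or simplify, and I expect the leading $p^2$ coefficient to collapse to $16$, with the genuinely $O(p^{5/2})$ residual from the $Y$-channel amplitude $\sqrt{p^2}=p$ (whose square root is not analytic-cancelling in the same way) producing the stated $32p^{5/2}$ term. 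The main obstacle will be this last point: unlike the depolarizing channel, the $XZ$-channel has one Pauli weight scaling as $p^2$ rather than $p$, so the square-root amplitude on that branch behaves like $p$ and its cross terms with the $O(\sqrt{p})$ amplitudes do not cancel at order $p^{3/2}$ in the naive way; I would need to track these half-integer-power contributions carefully to confirm that they survive only at order $p^{5/2}$ and contribute the coefficient $32$, rather than spoiling the $O(p^2)$ leading behavior. As in \Cref{thm:depol-degradability-parameter}, the bookkeeping of the singular values of the four blocks is best delegated to a symbolic computation, with the final Taylor expansion yielding $16p^2 + 32p^{5/2} + O(p^3)$.
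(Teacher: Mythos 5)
Your proposal follows essentially the same route as the paper's proof: the degrading map $\cC_s^c$ with $s = p+4p^2$ (note the theorem only dictates $a_1=a_3=4$; since $c_2=0$, $a_2$ is free, and $a_2=4$ is the convenient choice making the degrader a single-parameter substitution), reduction of $\|\Phi\|_\dn$ to $\tfrac{1}{2}\|\cJ(\Phi)\|_1$ via Hermiticity preservation and covariance, a triangle-inequality bound over the four $4\times 4$ blocks of $\cJ(\Phi)$, and a symbolic Taylor expansion; your tracking of the half-integer powers sourced by the $Y$-weight $p_2=p^2$, which survive only at order $p^{5/2}$ and produce the $32p^{5/2}$ term, matches the paper's computation. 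One correction: $\cC_p$ is \emph{not} covariant under the full unitary group (among Pauli channels only the depolarizing family is); the paper instead invokes joint covariance under the Pauli group, which is a unitary $1$-design and already suffices for \cite[Cor.~2.5]{LKDW17} to place the diamond-norm optimizer at the maximally entangled state, so your argument goes through under this weaker, correct premise.
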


\begin{proof}
	For the $XZ$-channel $\cC_p$, we have $p_0 = (1-p)^2$, $p_1=p_3=p-p^2$, and $p_2=p^2$ by \eqref{eq:XZ-channel-p}.
	Furthermore, as in the discussion above we choose $s=p+4p^2$, and set $q_0 = (1-s)^2$, $q_1 = q_3 = s-s^2$, and $q_2 = s^2$, such that the map $\Phi=\cC^c_p-\cC^c_{s}\circ\cC_p$ as given in \eqref{eq:phi-matrix} has coefficients that are $O(p^2)$ by Corollary \ref{cor:choi-coefficients}.
	Since $\Phi$ is Hermiticity-preserving, its diamond norm is maximized by a pure state \cite{Wat16}, and since both $\cC^c_p$ and $\cC^c_s\circ\cC_p$ are covariant with respect to the Pauli group, a 1-design, this pure state can be chosen to be the maximally entangled state $\frac{1}{2}(|00\rangle + |11\rangle)$ \cite[Cor.~2.5]{LKDW17}.
	Hence,
	\begin{align}
	\|\cC^c_p - \cC^c_s\circ\cC_p\|_\diamond = \|\Phi\|_\diamond = \frac{1}{2} \|\cJ(\Phi)\|_1.
	\end{align}
	Exploiting the block structure $\frac{1}{2}\cJ(\Phi) = \begin{pmatrix}
	G_{00} & G_{01} \\ G_{10} & G_{11}
	\end{pmatrix}$
	that follows from \eqref{eq:phi-matrix} together with the triangle inequality for the trace norm, we obtain the upper bound
	\begin{align}
	\dg(\cC_p) \leq \|\cC^c_p - \cC^c_s\circ\cC_p\|_\diamond = \frac{1}{2} \|\cJ(\Phi)\|_1 \leq \|G_{00}\|_1 + \|G_{11}\|_1 + \|G_{01}\|_1 + \|G_{10}\|_1, \label{eq:dg-cC-upper-bound}
	\end{align}
	and a Taylor expansion of the right-hand side of \eqref{eq:dg-cC-upper-bound} shows that $\dg(\cC_p) \leq 16 p^2 + 32 p^{5/2} + O(p^3)$, which proves the claim.\footnote{We refer to the Mathematica notebook \texttt{XZ-deg-bound.nb} included in the ancillary files of \url{https://arxiv.org/abs/1705.04335} for details.}
\end{proof}

In Figure \ref{fig:XZ-channel} we compare the optimal degradability parameter $\dg(\cC_p)$ with the quantity $\|\cC_p^c - \cC_s^c\circ \cC_p \|_{\dn}$ and the analytical upper bound $16p^2 + 32p^{5/2}$ obtained from Theorem \ref{thm:XZ-improved-coefficient}.

Numerics suggest that the coherent information $I_c(\cC_p)$ is achieved by a completely mixed state $\pi = I/2$, giving
\begin{align}
I_c(\cC_p) = I_c(\pi; \cC_p) = 1-2 h(p).
\label{eq:XZ-coh-info}
\end{align}
Putting Corollary \ref{cor:pauli-channel-capacities} and Theorem \ref{thm:XZ-improved-coefficient} together, and 
using the above coherent information expression, we obtain 
\begin{corollary}
	For small $p$, the quantum and private capacity of the $XZ$-channel $\cC_p = \cN_{p,p}^{XZ}$ with equal $X$- and $Z$-dephasing probability are given by
	\begin{align}
	1 - 2 h(p) ~\leq~ & Q(\cC_p) ~\leq~ 1 - 2 h(p) - 32 \, p^2 \log p + O(p^2)\\
	1 - 2 h(p) ~\leq~ & P(\cC_p) \, ~\leq~ 1 - 2 h(p) - 96 \, p^2 \log p + O(p^2).
	\end{align}
\end{corollary}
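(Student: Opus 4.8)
I would deduce this corollary directly by feeding the sharpened degradability estimate of \Cref{thm:XZ-improved-coefficient} into the continuity-type bounds of \Cref{lem:qbound1} and \Cref{lem:pbound1}, and then substituting the coherent-information value \eqref{eq:XZ-coh-info}. The two lower bounds are immediate: the quantum capacity theorem gives $Q(\cC_p)\geq I_c(\cC_p)$, the general inequality $P(\cC_p)\geq Q(\cC_p)$ applies, and $I_c(\cC_p)\geq I_c(\pi;\cC_p)=1-2h(p)$ because $I_c(\cC_p)$ is a maximum over inputs; hence both capacities are at least $1-2h(p)$. All the content therefore lies in the two upper bounds.

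For the upper bounds I would read off from \Cref{thm:XZ-improved-coefficient} that $\dg(\cC_p)\leq 16p^2+32p^{5/2}+O(p^3)$, so the hypotheses of \Cref{lem:qbound1} and \Cref{lem:pbound1} hold with $r=2$ and effective constant $c=16$. Substituting into the quantum bound, the dominant term $crp^{r-1}(-p\log p)$ equals $32\,p\,(-p\log p)=-32\,p^2\log p$, while the $cp^r(\cdots)$ term and the $O(p^{3r})=O(p^6)$ remainder are both $O(p^2)$; together with $I_c(\cC_p)=1-2h(p)$ this gives $Q(\cC_p)\leq 1-2h(p)-32\,p^2\log p+O(p^2)$. \Cref{lem:pbound1} is identical except for the factor $3$ on the leading term, so it yields $3crp^{r-1}(-p\log p)=96\,p\,(-p\log p)=-96\,p^2\log p$ and hence $P(\cC_p)\leq 1-2h(p)-96\,p^2\log p+O(p^2)$. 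This is exactly \Cref{cor:pauli-channel-capacities} specialized to $\cC_p$, with the degradability constant improved from the generic $64\,C_{\bp}p^2=64p^2$ to $16p^2$; since the leading coefficient scales linearly in $c$, the factor-of-four reduction turns the generic $-128$ and $-384$ into $-32$ and $-96$.

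The only delicate point, and the part I would write out carefully, is the order bookkeeping, since $\dg(\cC_p)$ is not a clean monomial. As $f_1,f_2$ are increasing near $0$ it suffices to substitute the upper bound; one then checks that the $32p^{5/2}$ correction enters the leading $-\eta\log\eta$ structure only at order $p^{5/2}\log p=o(p^2)$, so it is absorbed into the $O(p^2)$ error and leaves the coefficients $32$ and $96$ untouched. I would also note that the Choi rank is $|E|=4$ here (all four weights $p_0,p_1,p_2,p_3$ are nonzero for $0<p<1$), but $|E|$ appears only in the linear-in-$\eta$, hence $O(p^2)$, term and so need not be tracked. The one external input is \eqref{eq:XZ-coh-info}: the lower bound uses only $I_c(\pi;\cC_p)=1-2h(p)$, whereas the upper bound substitutes the exact maximum $I_c(\cC_p)=1-2h(p)$, which relies on the numerically supported optimality of the maximally mixed input; making that optimality rigorous---rather than any of the order-counting---is the genuine obstacle to a fully self-contained proof.
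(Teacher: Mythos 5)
Your proposal is correct and takes essentially the same route as the paper, which obtains this corollary by feeding \Cref{thm:XZ-improved-coefficient} (i.e.\ $c=16$, $r=2$ in \Cref{lem:qbound1} and \Cref{lem:pbound1}, equivalently \Cref{cor:pauli-channel-capacities} with the improved constant) into the coherent-information expression \eqref{eq:XZ-coh-info}. The caveat you flag about the optimality of the maximally mixed input is shared by the paper itself, which supports \eqref{eq:XZ-coh-info} only numerically.
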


\begin{figure}
	\centering
	\begin{tikzpicture}
	\begin{axis}[
	scale = 1.3,
	axis lines=left, 
	scaled ticks=false, 
	tick label style={/pgf/number format/fixed, /pgf/number format/precision=3},
	legend style = {at = {(0.05,0.95)},anchor = north west},
	legend cell align = left,
	grid = major,
	xlabel = $p$,
	xtick = {0,0.02,0.04,0.06,0.08,0.1}, 
	] 
	\addplot[color=blue,thick] table[x=p,y=d] {XZ.dat};
	\addplot[color=red,thick] table[x=p, y=s] {XZ.dat};
	\addplot[draw=none, color=white] table[x=p, y=s] {XZ.dat};
	\addplot[smooth,mark=none, color = green, thick,domain=0:0.1] {16*x^2+32*x^(2.5)};
	\legend{$\dg(\cC_p)$, $\|\cC_p^c - \cC_{s}^c\circ \cC_p \|_{\dn}$, where $s = p + 4 p^2$,$16p^2 + 32p^{5/2}$};
	\end{axis}
	\end{tikzpicture}
	\caption{Plot of the optimal degradability parameter $\dg(\cC_p)$ (blue) of the $XZ$-channel $\cC_p$ computed using the SDP \eqref{eq:approx-deg-sdp}, together with the degradability parameter $\|\cC_p^c - \cC_{s}^c\circ \cC_p \|_{\dn}$ (red) obtained using the degrading map $\cC_{s}^c$ with $s = p + 4p^2$, and the analytical upper bound $16p^2 + 32p^{5/2}$ (green) obtained from Theorem \ref{thm:XZ-improved-coefficient}.}
	\label{fig:XZ-channel}
\end{figure}

\section{Conclusion}\label{sec:conclude}

Our results can be extended to cover \emph{generalized low-noise} channels $\cN$, for which there exists another channel $\cM$ such that $\| \cM\circ\cN - I \|_{\diamond} \leq \epsilon.$
For example, this includes all channels that are close to isometric channels.
For a generalized low-noise channel, we have by Theorem \ref{thm:complementary-degrading} that
\begin{align} \label{Eq:MN-approx-deg}
\| (\cM\circ\cN)^{c} - (\cM\circ\cN)^{c}\circ (\cM\circ\cN) \|_{\diamond} \leq 2 \epsilon^{3/2}.
\end{align}
Furthermore, up to an isometry, 
\begin{align}
(\cM\circ \cN)^c(\rho) = (\cM^c\otimes I_{E_1})(U_\cN \rho U_{\cN}^\dagger),
\end{align}
where $U_\cN: A \rightarrow BE_1$ is an isometric extension of $\cN$ and $\cM^c : B \rightarrow E_2$, so that $\tr_{E_2}(\cM\circ \cN)^c(\rho) = \cN^{c}(\rho)$.
Equation \ref{Eq:MN-approx-deg} therefore implies
\begin{align}
\| \cN^{c} - \tr_{E_2}(\cM\circ\cN)^{c}\circ (\cM\circ\cN) \|_{\diamond} \leq 2 \epsilon^{3/2},
\end{align}
so that letting $\cD =  \tr_{E_2}(\cM\circ\cN)^{c}\circ \cM$ we have $\| \cN^{c} - \cD\circ\cN \|_{\diamond} \leq 2 \epsilon^{3/2}$ and $\cN$ has degradability parameter no bigger than $ 2 \epsilon^{3/2}$.  
We thus find that the same bounds as in Theorem \ref{lem:qbound2} apply in the case of a generalized low-noise channel $\cN$.

We conclude with some implications of our results.  
The quantum and private capacities of a quantum channel are intractable to calculate in general.  This is because nonadditivity effects require us to regularize the coherent information and private information to obtain the quantum and private capacity, respectively. 
We have shown that for low-noise channels, for which $\|\cN - \id\|_\diamond \leq \eps$, such nonadditivity effects are negligible.  In particular, we find that both the private and quantum capacities of these channels are given by the one-shot coherent information $I_c(\cN)$, up to corrections of order $\eps^{1.5}\log \eps$.  
Furthermore, for the qubit depolarizing channel $\cD_p$ we obtain even tighter bounds for both the quantum and private capacities:
\begin{align}
1 -h(p) - p\log 3 ~ \leq ~ & Q(\cD_p) ~ \leq ~ 1-h(p) - p\log 3 - \frac{16}{9}(6+\sqrt{2}) \, p^2 \log p + O(p^2)\\
1-h(p) - p\log 3 ~ \leq ~ & P(\cD_p) ~ \leq ~ 1-h(p) - p\log 3 - \frac{16}{3}(6+\sqrt{2}) \, p^2 \log p + O(p^2),
\end{align}
and similar results hold for all generalized Pauli channels in dimension $d$. 
Our key new finding is that channels within $\eps$  of perfect are also exceptionally close to degradable, with degradability parameter of $O(\eps^{1.5})$ in general and $O(p^2)$ for generalized Pauli channels.  

The nonadditivty of coherent information for a general channel implies
that degenerate codes are sometimes needed to achieve the quantum
capacity \cite{DiVincenzoSS98,SS07,Fern08,SY08,SS09,Cubitt2015unbounded}, 
but little is known about these codes despite 20 years of research.
Having shown that the coherent information is essentially the quantum
capacity for low-noise channels, we have also arrived at a refreshing
result that using random block codes on the typical subspace of the 
optimal input (for the 1-shot coherent information) essentially 
achieves the capacity.

Likewise, our findings have implications in quantum cryptography.  In quantum
key distribution, quantum states are transmitted through
well-characterized noisy quantum channels that are subject to further
adversarial attacks.  Parameter estimation is used to determine the
effective channel (see for example \cite{RGK05}) and the optimal key
rate of one-way quantum key distribution is the private capacity of
the effective channel.
These effective channels typically have low noise (e.g., $1-2\%$ in
\cite{QBER09}), and our results imply near-optimality of the simple
classical error correction and privacy amplification procedures that
approach the one-shot coherent information of the effective channel.
In particular, for the XZ-channel with bit-flip probability $p$, the
optimal key rate is $1-2h(p) + O(p^2\log p)$.

\section{Acknowledgements}

We thank Charles Bennett, Ke Li, John Smolin, and John Watrous for inspiring discussions, Mark M.~Wilde for helpful feedback, and Min-Hsiu Hsieh and Yen-Chi Lee for pointing out a small glitch in proving the rough bounds in Theorems \ref{thm:depol-degradability-parameter} and \ref{thm:gen-pauli-channels}, which led us to prove a stronger version of Lemma \ref{lem:coefficients-bound}.
DL is further supported by NSERC and CIFAR, and FL and GS by the National Science Foundation under Grant Number 1125844.

\appendix

\section{Proof of \texorpdfstring{\Cref{thm:gen-pauli-channels}}{Theorem \ref{thm:gen-pauli-channels}}}\label{sec:proof-gen-pauli-channels}

In this appendix we prove \Cref{thm:gen-pauli-channels}, which states that for any generalized Pauli channel 
\begin{align}
\cN_\bp (\rho) = p_0 \rho + p_1 X\rho X + p_2 Y\rho Y + p_3 Z\rho Z,\label{eq:pauli-channel}
\end{align}
for which the coefficients $p_i(p) = c_i p + d_i p^2 + \cdots$ are polynomials with zero constant term in a parameter $p$, there is a choice of $a_1, a_2, a_3$ such that
\begin{align}
\| \cN_\bp^c - \cN_{\bs(a_1,a_2,a_3)}^c \circ \cN_\bp \|_\dn = O(p^2).\label{eq:quadratic-behavior-app}
\end{align}
Here, we define $\bs(a_1,a_2,a_3)\coloneqq (p_0', p_1(p + a_1 p^2), p_2(p+a_2 p^2), p_3(p+a_3p^2))$, with $p_0'$ chosen such that $\bs(a_1,a_2,a_3)$ is a probability distribution.

An isometric extension of the Pauli channel is given by
\begin{align}
V_\bp \coloneqq \sqrt{p_0}\,\one \ox |0\rangle_E + \sqrt{p_1}\,X\ox |1\rangle_E + \sqrt{p_2}\, Y \ox |2\rangle_E + \sqrt{p_3}\, Z\ox |3\rangle_E,
\end{align}
where $\lbrace |0\rangle_E, |1\rangle_E, |2\rangle_E, |3\rangle_E\rbrace$ is an orthonormal basis for the environment $E$.
For an arbitrary linear operator $\rho$, it is straightforward to find that (see e.g.~\cite{LW15}) 
\begin{align} 
\cN_\bp^c(\rho) = \tr_B(V_\bp \rho V_\bp^\dagger) = 
\begin{pmatrix}
p_0 \tr(\rho) & \sqrt{p_0 p_1} \ip{X}{\rho} & \sqrt{p_0 p_2} \ip{Y}{\rho} & \sqrt{p_0 p_3} \ip{Z}{\rho}\\[0.2cm]
\sqrt{p_0 p_1} \ip{X}{\rho} & p_1 \tr(\rho) & -i \sqrt{p_1 p_2} \ip{Z}{\rho} & i \sqrt{p_1 p_3} \ip{Y}{\rho} \\[0.2cm]
\sqrt{p_0 p_2} \ip{Y}{\rho} & i \sqrt{p_1 p_2} \ip{Z}{\rho} & p_2 \tr(\rho) & -i \sqrt{p_2 p_3} \ip{X}{\rho} \\[0.2cm]
\sqrt{p_0 p_3} \ip{Z}{\rho} & -i \sqrt{p_1 p_3} \ip{Y}{\rho} & i \sqrt{p_2 p_3} \ip{X}{\rho}  & p_3 \tr(\rho)
\end{pmatrix}. 
\end{align}
Writing $\bq=(q_0,q_1,q_2,q_3)$, the action of the map $\Phi \coloneqq \cN_\bp^c - \cN_{\bq}^c\circ \cN_\bp$ on a linear operator $\rho$ is
\begin{align}
\Phi(\rho) &= \begin{pmatrix}
(p_0-q_0) \tr(\rho) & t_1 \ip{X}{\rho} & t_2 \ip{Y}{\rho} & t_3 \ip{Z}{\rho} \\[0.2cm]
t_1 \ip{X}{\rho} & (p_1-q_1)\tr(\rho) & -i u_3 \ip{Z}{\rho} & i u_2 \ip{Y}{\rho}\\[0.2cm]
t_2 \ip{Y}{\rho} & i u_3 \ip{Z}{\rho} & (p_2-q_2)\tr(\rho) & -i u_1 \ip{X}{\rho}\\[0.2cm]
t_3 \ip{Z}{\rho} & -i u_2 \ip{Y}{\rho} & i u_1 \ip{X}{\rho} & (p_3-q_3)\tr(\rho)
\end{pmatrix}\label{eq:phi-matrix}
\end{align}
with the following coefficients:
\begin{align}
t_1 &= \sqrt{p_0p_1} - \sqrt{q_0q_1}\,(p_0+p_1-p_2-p_3) \label{eq:t1}\\
t_2 &= \sqrt{p_0p_2} - \sqrt{q_0q_2}\,(p_0-p_1+p_2-p_3) \label{eq:t2}\\
t_3 &= \sqrt{p_0p_3} - \sqrt{q_0q_3}\,(p_0-p_1-p_2+p_3) \label{eq:t3}\\
u_1 &= \sqrt{p_2p_3} - \sqrt{q_2q_3}\,(p_0+p_1-p_2-p_3) \label{eq:u1}\\
u_2 &= \sqrt{p_1p_3} - \sqrt{q_1q_3}\,(p_0-p_1+p_2-p_3) \label{eq:u2}\\
u_3 &= \sqrt{p_1p_2} - \sqrt{q_1q_2}\,(p_0-p_1-p_2+p_3) \label{eq:u3}.
\end{align}
Similar to the case for the depolarizing channel, $\| \cJ(\Phi) \|_{\max}$
is the maximum among $|p_i - q_i|$ for $i=0,1,2,3$ and $t_i, u_i$ for
$i=1,2,3$.

We choose $q_i(p) = p_i(p+a_ip^2)$ for $i=1,2,3$, and $q_0 = 1-q_1-q_2-q_3$.
It follows that $p_i-q_i = O(p^2)$ for $i=0,1,2,3$.
Similarly, expanding the $u_i$ in a Taylor series around $p=0$ shows that 
$u_i =O(p^2)$ for any choices of $a_i$.
Hence, it remains to check the coefficients $t_i$, which we again expand in a Taylor series around $p=0$:
\begin{align}
t_1 &= \frac{\sqrt{c_1}}{2}(a_1 - 4(c_2+c_3)) p^{3/2} + O(p^{5/2})\\
t_2 &= \frac{\sqrt{c_2}}{2}(a_2 - 4(c_1+c_3)) p^{3/2} + O(p^{5/2})\\
t_3 &= \frac{\sqrt{c_3}}{2}(a_3 - 4(c_1+c_2)) p^{3/2} + O(p^{5/2}).
\end{align}
We see that if $c_i=0$, then the coefficient of $p^{3/2}$ in $t_i$ vanishes 
for any choice of $a_i$.
If on the other hand $c_i\neq 0$, then the coefficient of $p^{3/2}$ in $t_i$ vanishes upon choosing $a_i = 4\sum_{j\neq i} c_j$.

In summary, for the above choices of $a_1, a_2, a_3$, the max norm $\| \cJ(\Phi) \|_{\max}$ is the
maximum of $|p_i-q_i|$ for $i=0,1,2,3$ and $u_i$ for $i=1,2,3$.  
Recall that
\begin{align}
p_i(p) &= c_i p + d_i p^2 + O(p^3)\\
q_i(p) &= p_i(p+a_ip^2) = c_ip + (a_ic_i + d_i) p^2 + O(p^3).
\end{align}
Hence, since $p\approx 0$,
\begin{align}
\max_i |p_i-q_i| &= |p_0 - q_0|\\
 &= \left| \sum (p_i - q_i) \right|\\
 &= \left|\sum a_ic_i \right| p^2 + O(p^3)\\
 &= 8 |c_1c_2 + c_1c_3 + c_2c_3| p^2 + O(p^3),\label{eq:max-p-minus-q}
\end{align}
where we substituted $a_i = 4\sum_{j\neq i} c_j$ in the last line.
Expanding $u_i$ for $p \approx 0$ gives 
\begin{align}
\begin{aligned}
u_1 &= -4 c_1 \sqrt{c_2 c_3} p^2 + O(p^3)\\
u_2 &= -4 c_2 \sqrt{c_1 c_3} p^2 + O(p^3)\\
u_3 &= -4 c_3 \sqrt{c_1 c_2} p^2 + O(p^3). 
\end{aligned} 
\label{eq:u-expanded}
\end{align}
From \eqref{eq:max-p-minus-q} and \eqref{eq:u-expanded}, we infer
\begin{align}
\| \cJ(\Phi) \|_{\max} = |p_0 - q_0| = 
8 |c_1 c_2 + c_1 c_3 + c_2 c_3| p^2 + O(p^3),
\end{align}
so that by Lemma \ref{lem:coefficients-bound} we have 
$\| \Phi \|_\dn \leq 256 |c_1 c_2 + c_1 c_3 + c_2 c_3| p^2 + O(p^3)$, 
which concludes the proof of \Cref{thm:gen-pauli-channels}.

\printbibliography[title={References},heading=bibintoc]

\end{document}